\documentclass[letterpaper,twocolumn,10pt]{article}
\usepackage{usenix-2020-09}

\usepackage{array}
\usepackage{xcolor}
\usepackage{amsthm}
\usepackage{algorithm}
\usepackage{algorithmic}
\usepackage{booktabs}
\usepackage{multirow}
\usepackage{subcaption}
\usepackage{pifont}
\usepackage{enumitem}
\usepackage{amsmath,amssymb,amsfonts}
\usepackage{graphicx}
\usepackage{textcomp}
\usepackage{cite}
\usepackage{url}
\usepackage{hyperref}

\setitemize{noitemsep,topsep=1.5pt,parsep=0pt,partopsep=0pt, leftmargin=*}
\setenumerate{noitemsep,topsep=1.5pt,parsep=0pt,partopsep=0pt, leftmargin=*}

\newif\ifcomment
\commenttrue

\ifcomment
    \newcounter{YXNumberOfComments}
    \stepcounter{YXNumberOfComments}
    \newcommand{\xym}[1]{\textcolor{orange}{\small \bf [XYM\#\arabic{YXNumberOfComments}\stepcounter{YXNumberOfComments}: #1]}}

    \newcounter{PHNumberOfComments}
    \stepcounter{PHNumberOfComments}
    \newcommand{\hpc}[1]{\textcolor{blue}{\small \bf [HPC\#\arabic{PHNumberOfComments}\stepcounter{PHNumberOfComments}: #1]}}  
    
    \newcounter{LMYNumberOfComments}
    \stepcounter{LMYNumberOfComments}
    \newcommand{\lmy}[1]{\textcolor{red}{\small \bf [LMY\#\arabic{LMYNumberOfComments}\stepcounter{LMYNumberOfComments}: #1]}} 
    
    \newcommand{\del}[1]{{{\color{red}\st{#1}}}}
\else
    \newcommand\xym[1]{}    
    \newcommand\hpc[1]{} 
    \newcommand\lmy[1]{} 
    \newcommand{\del}[1]{}
\fi

\newtheorem{definition}{Definition}
\newtheorem{theorem}{Theorem}
\newtheorem{lemma}[theorem]{Lemma}

\newcommand{\bcode}{\mathbf{b}}

\newcommand{\yesmark}{\ding{51}}
\newcommand{\dpmark}{\textbf{\textsc{dp}}}
\newcommand{\partialmark}{$\triangle$}
\newcommand{\nomark}{\ding{55}}

\DeclareMathOperator{\sign}{sign}

\newcommand{\DPMainUtilityTable}{%
\begin{table*}[t]
\centering
\caption{Representative two-forward DP operating points. NDCG@10 averages E5 and BGE on SciDocs, NQ, and FEVER. RAG latency uses Qwen3-32B with 128 output tokens on NQ; $\Delta$ is the absolute protection cost over the plaintext two-forward pipeline.}
\label{tab:dp-retrieval}
\footnotesize
\resizebox{0.7\linewidth}{!}{%
\begin{tabular}{@{}llrrrrr@{}}
\toprule
Release & Guarantee & $(\varepsilon,K)$ & NDCG@10 & Ret. & RAG s & $\Delta$s \\
\midrule
Full-corpus float & Reference & $(\infty,N)$ & .5394 & 100.0\% & 7.302 & --- \\
Plaintext filter & None & $(\infty,500)$ & .5363 & 99.4\% & 7.305 & 0 \\
\midrule
Randomized response & $(\varepsilon,\delta)$-DP & $(16,3000)$ & .0306 & 5.7\% & 8.403 & +1.098 \\
Gaussian & $(\varepsilon,\delta)$-DP & $(8,3000)$ & .5084 & 94.3\% & 8.403 & +1.098 \\
Gaussian & $(\varepsilon,\delta)$-DP & $(16,2000)$ & .5350 & 99.2\% & 8.033 & +0.728 \\
RDP-vMF & $(\varepsilon,\delta)$-DP & $(8,3000)$ & .5187 & 96.2\% & 8.403 & +1.098 \\
RDP-vMF & $(\varepsilon,\delta)$-DP & $(16,1000)$ & .5350 & 99.2\% & 7.676 & +0.371 \\
Pure-vMF & Pure metric DP & $(32,3000)$ & .5229 & 96.9\% & 8.403 & +1.098 \\
Pure-vMF & Pure metric DP & $(64,2000)$ & .5359 & 99.4\% & 8.033 & +0.728 \\
\bottomrule
\end{tabular}
}
\end{table*}%
}

\newcommand{\DPEFiveFullTable}{%
\begin{table*}[t]
\centering
\caption{Complete E5 two-forward NDCG@10 under query-side DP. Randomized response, Gaussian, and RDP-vMF use $\delta=10^{-6}$; pure-vMF gives the formal pure metric-DP guarantee.}
\label{tab:dp-e5-full}
\small
\renewcommand{\arraystretch}{1.0}
\setlength{\tabcolsep}{5pt}
\begin{tabular}{@{}lllrrrrrr@{}}
\toprule
Dataset & Mechanism & $\varepsilon$ & $K=200$ & $K=500$ & $K=1000$ & $K=2000$ & $K=3000$ & Float \\
\midrule
\multirow{17}{*}{SciDocs} & No DP & $\infty$ & .1884 & .1881 & .1877 & .1872 & .1870 & .1870 \\
 & Randomized response & 8 & .0102 & .0203 & .0308 & .0474 & .0596 & .1870 \\
 & Randomized response & 16 & .0111 & .0235 & .0380 & .0536 & .0700 & .1870 \\
 & Randomized response & 32 & .0250 & .0458 & .0632 & .0870 & .1057 & .1870 \\
 & Randomized response & 64 & .0566 & .0766 & .0992 & .1262 & .1407 & .1870 \\
 & Gaussian & 8 & .1558 & .1685 & .1784 & .1829 & .1844 & .1870 \\
 & Gaussian & 16 & .1857 & .1868 & .1873 & .1880 & .1876 & .1870 \\
 & Gaussian & 32 & .1853 & .1883 & .1875 & .1874 & .1872 & .1870 \\
 & Gaussian & 64 & .1885 & .1871 & .1875 & .1872 & .1873 & .1870 \\
 & RDP-vMF & 8 & .1630 & .1776 & .1820 & .1834 & .1854 & .1870 \\
 & RDP-vMF & 16 & .1833 & .1850 & .1872 & .1869 & .1866 & .1870 \\
 & RDP-vMF & 32 & .1860 & .1871 & .1877 & .1874 & .1872 & .1870 \\
 & RDP-vMF & 64 & .1874 & .1877 & .1880 & .1874 & .1870 & .1870 \\
 & Pure-vMF & 8 & .0486 & .0750 & .0944 & .1199 & .1363 & .1870 \\
 & Pure-vMF & 16 & .1210 & .1447 & .1599 & .1732 & .1804 & .1870 \\
 & Pure-vMF & 32 & .1757 & .1851 & .1848 & .1863 & .1866 & .1870 \\
 & Pure-vMF & 64 & .1846 & .1854 & .1857 & .1869 & .1868 & .1870 \\
\midrule
\multirow{17}{*}{NQ} & No DP & $\infty$ & .5694 & .5770 & .5801 & .5820 & .5826 & .5854 \\
 & Randomized response & 8 & .0003 & .0005 & .0008 & .0015 & .0043 & .5854 \\
 & Randomized response & 16 & .0018 & .0025 & .0041 & .0069 & .0096 & .5854 \\
 & Randomized response & 32 & .0044 & .0094 & .0155 & .0264 & .0326 & .5854 \\
 & Randomized response & 64 & .0379 & .0619 & .0873 & .1170 & .1401 & .5854 \\
 & Gaussian & 8 & .4440 & .4845 & .5072 & .5271 & .5389 & .5854 \\
 & Gaussian & 16 & .5519 & .5657 & .5708 & .5761 & .5778 & .5854 \\
 & Gaussian & 32 & .5663 & .5750 & .5783 & .5811 & .5820 & .5854 \\
 & Gaussian & 64 & .5689 & .5781 & .5800 & .5821 & .5825 & .5854 \\
 & RDP-vMF & 8 & .4645 & .5015 & .5257 & .5434 & .5522 & .5854 \\
 & RDP-vMF & 16 & .5589 & .5695 & .5752 & .5765 & .5778 & .5854 \\
 & RDP-vMF & 32 & .5681 & .5761 & .5796 & .5822 & .5827 & .5854 \\
 & RDP-vMF & 64 & .5701 & .5764 & .5800 & .5821 & .5832 & .5854 \\
 & Pure-vMF & 8 & .0254 & .0414 & .0592 & .0852 & .1043 & .5854 \\
 & Pure-vMF & 16 & .2353 & .2971 & .3443 & .3985 & .4219 & .5854 \\
 & Pure-vMF & 32 & .5061 & .5328 & .5465 & .5571 & .5617 & .5854 \\
 & Pure-vMF & 64 & .5596 & .5685 & .5732 & .5781 & .5801 & .5854 \\
\midrule
\multirow{17}{*}{FEVER} & No DP & $\infty$ & .8392 & .8441 & .8458 & .8472 & .8475 & .8501 \\
 & Randomized response & 8 & .0007 & .0008 & .0013 & .0022 & .0024 & .8501 \\
 & Randomized response & 16 & .0000 & .0005 & .0011 & .0024 & .0039 & .8501 \\
 & Randomized response & 32 & .0039 & .0060 & .0103 & .0162 & .0226 & .8501 \\
 & Randomized response & 64 & .0343 & .0542 & .0762 & .1037 & .1264 & .8501 \\
 & Gaussian & 8 & .6029 & .6675 & .7083 & .7410 & .7561 & .8501 \\
 & Gaussian & 16 & .8161 & .8255 & .8319 & .8380 & .8412 & .8501 \\
 & Gaussian & 32 & .8389 & .8436 & .8465 & .8478 & .8481 & .8501 \\
 & Gaussian & 64 & .8396 & .8435 & .8470 & .8480 & .8482 & .8501 \\
 & RDP-vMF & 8 & .6484 & .7037 & .7406 & .7690 & .7867 & .8501 \\
 & RDP-vMF & 16 & .8246 & .8332 & .8390 & .8428 & .8452 & .8501 \\
 & RDP-vMF & 32 & .8385 & .8427 & .8460 & .8485 & .8488 & .8501 \\
 & RDP-vMF & 64 & .8396 & .8439 & .8464 & .8476 & .8483 & .8501 \\
 & Pure-vMF & 8 & .0187 & .0331 & .0503 & .0713 & .0890 & .8501 \\
 & Pure-vMF & 16 & .2753 & .3489 & .4098 & .4722 & .5097 & .8501 \\
 & Pure-vMF & 32 & .7293 & .7665 & .7908 & .8087 & .8171 & .8501 \\
 & Pure-vMF & 64 & .8260 & .8350 & .8400 & .8434 & .8447 & .8501 \\
\bottomrule
\end{tabular}
\end{table*}%
}

\newcommand{\DPBGEFullTable}{%
\begin{table*}[t]
\centering
\caption{Complete BGE two-forward NDCG@10 under query-side DP. Randomized response, Gaussian, and RDP-vMF use $\delta=10^{-6}$; Pure-vMF gives the formal pure metric-DP guarantee.}
\label{tab:dp-bge-full}
\small
\renewcommand{\arraystretch}{1.0}
\setlength{\tabcolsep}{5pt}
\begin{tabular}{@{}lllrrrrrr@{}}
\toprule
Dataset & Mechanism & $\varepsilon$ & $K=200$ & $K=500$ & $K=1000$ & $K=2000$ & $K=3000$ & Float \\
\midrule
\multirow{17}{*}{SciDocs} & No DP & $\infty$ & .2233 & .2225 & .2227 & .2229 & .2228 & .2228 \\
 & Randomized response & 8 & .0098 & .0222 & .0344 & .0552 & .0752 & .2228 \\
 & Randomized response & 16 & .0169 & .0335 & .0486 & .0672 & .0861 & .2228 \\
 & Randomized response & 32 & .0266 & .0473 & .0671 & .0953 & .1111 & .2228 \\
 & Randomized response & 64 & .0635 & .0984 & .1295 & .1534 & .1731 & .2228 \\
 & Gaussian & 8 & .2073 & .2180 & .2202 & .2228 & .2230 & .2228 \\
 & Gaussian & 16 & .2221 & .2231 & .2229 & .2230 & .2229 & .2228 \\
 & Gaussian & 32 & .2227 & .2226 & .2223 & .2228 & .2227 & .2228 \\
 & Gaussian & 64 & .2227 & .2224 & .2230 & .2229 & .2228 & .2228 \\
 & RDP-vMF & 8 & .2174 & .2220 & .2218 & .2230 & .2233 & .2228 \\
 & RDP-vMF & 16 & .2222 & .2227 & .2232 & .2231 & .2229 & .2228 \\
 & RDP-vMF & 32 & .2222 & .2224 & .2230 & .2229 & .2228 & .2228 \\
 & RDP-vMF & 64 & .2228 & .2227 & .2230 & .2227 & .2228 & .2228 \\
 & Pure-vMF & 8 & .0580 & .0822 & .1094 & .1434 & .1651 & .2228 \\
 & Pure-vMF & 16 & .1530 & .1784 & .1947 & .2097 & .2169 & .2228 \\
 & Pure-vMF & 32 & .2124 & .2199 & .2199 & .2218 & .2221 & .2228 \\
 & Pure-vMF & 64 & .2218 & .2228 & .2233 & .2232 & .2229 & .2228 \\
\midrule
\multirow{17}{*}{NQ} & No DP & $\infty$ & .5360 & .5391 & .5392 & .5401 & .5402 & .5414 \\
 & Randomized response & 8 & .0000 & .0003 & .0010 & .0035 & .0039 & .5414 \\
 & Randomized response & 16 & .0005 & .0021 & .0032 & .0073 & .0086 & .5414 \\
 & Randomized response & 32 & .0038 & .0071 & .0118 & .0230 & .0304 & .5414 \\
 & Randomized response & 64 & .0258 & .0417 & .0574 & .0789 & .0932 & .5414 \\
 & Gaussian & 8 & .4765 & .4988 & .5107 & .5217 & .5247 & .5414 \\
 & Gaussian & 16 & .5253 & .5322 & .5364 & .5376 & .5385 & .5414 \\
 & Gaussian & 32 & .5341 & .5369 & .5379 & .5396 & .5400 & .5414 \\
 & Gaussian & 64 & .5348 & .5375 & .5387 & .5392 & .5396 & .5414 \\
 & RDP-vMF & 8 & .4911 & .5093 & .5174 & .5251 & .5278 & .5414 \\
 & RDP-vMF & 16 & .5305 & .5366 & .5385 & .5393 & .5390 & .5414 \\
 & RDP-vMF & 32 & .5342 & .5380 & .5387 & .5398 & .5400 & .5414 \\
 & RDP-vMF & 64 & .5338 & .5371 & .5387 & .5395 & .5401 & .5414 \\
 & Pure-vMF & 8 & .0238 & .0345 & .0503 & .0690 & .0837 & .5414 \\
 & Pure-vMF & 16 & .2158 & .2679 & .3127 & .3496 & .3758 & .5414 \\
 & Pure-vMF & 32 & .4746 & .4969 & .5088 & .5206 & .5252 & .5414 \\
 & Pure-vMF & 64 & .5231 & .5317 & .5343 & .5372 & .5389 & .5414 \\
\midrule
\multirow{17}{*}{FEVER} & No DP & $\infty$ & .8448 & .8470 & .8480 & .8483 & .8488 & .8495 \\
 & Randomized response & 8 & .0001 & .0006 & .0011 & .0017 & .0021 & .8495 \\
 & Randomized response & 16 & .0003 & .0010 & .0017 & .0042 & .0055 & .8495 \\
 & Randomized response & 32 & .0025 & .0055 & .0088 & .0160 & .0198 & .8495 \\
 & Randomized response & 64 & .0258 & .0430 & .0619 & .0888 & .1035 & .8495 \\
 & Gaussian & 8 & .7609 & .7884 & .8046 & .8188 & .8236 & .8495 \\
 & Gaussian & 16 & .8398 & .8442 & .8458 & .8472 & .8479 & .8495 \\
 & Gaussian & 32 & .8450 & .8463 & .8476 & .8483 & .8488 & .8495 \\
 & Gaussian & 64 & .8463 & .8474 & .8478 & .8482 & .8487 & .8495 \\
 & RDP-vMF & 8 & .7942 & .8129 & .8237 & .8336 & .8370 & .8495 \\
 & RDP-vMF & 16 & .8422 & .8451 & .8469 & .8483 & .8486 & .8495 \\
 & RDP-vMF & 32 & .8453 & .8473 & .8479 & .8483 & .8485 & .8495 \\
 & RDP-vMF & 64 & .8459 & .8477 & .8479 & .8482 & .8485 & .8495 \\
 & Pure-vMF & 8 & .0184 & .0307 & .0437 & .0647 & .0806 & .8495 \\
 & Pure-vMF & 16 & .2781 & .3540 & .4133 & .4766 & .5186 & .8495 \\
 & Pure-vMF & 32 & .7579 & .7875 & .8027 & .8177 & .8248 & .8495 \\
 & Pure-vMF & 64 & .8359 & .8433 & .8450 & .8466 & .8471 & .8495 \\
\bottomrule
\end{tabular}
\end{table*}%
}

\title{Pointing the Way, Hiding the Destination: Practical Private Dense Retrieval at Scale}

\author{
\normalfont
Peichun Hua\textsuperscript{1,2},
Danyang Chen\textsuperscript{1},
Junan Zhang\textsuperscript{1},
Haifeng Sun\textsuperscript{3},
Jingyu Wang\textsuperscript{3},\\ \normalfont
Diwen Xue\textsuperscript{4},
Mingyu Li\textsuperscript{5},
Yunming Xiao\textsuperscript{1,2}\\[0.8em]
\textsuperscript{1}The Chinese University of Hong Kong, Shenzhen
\\
\textsuperscript{2}State Key Laboratory of Internet Architecture, Tsinghua University \\
\textsuperscript{3}Beijing University of Posts and Telecommunications \quad
\textsuperscript{4}The Chinese University of Hong Kong\\
\textsuperscript{5}Institute of Software, Chinese Academy of Sciences
}

\begin{document}

\thispagestyle{plain}
\pagestyle{plain}

\maketitle

\begin{abstract}
Hosted retrieval-augmented generation (RAG) and semantic search allow users to query valuable provider-held corpora, raising two competing demands: to hide each query and chosen result, yet reveal only the documents that the user is authorized to receive.
Existing cryptographic approaches either make this costly by processing the entire corpus for every query, or sacrifice quality for efficiency by scanning a few clusters.
We repurpose learned deep hashing as a private filter: a randomized binary code points the provider to a short candidate list, while encrypted reranking and oblivious key transfer protect the precise query and final selection. This shortlist short-circuits full-corpus cryptographic search without sacrificing retrieval quality: with 200–500 candidates, it closely matches full-corpus retrieval across five zero-shot corpora spanning 25K to 5.4M documents.
On the full 2.68M-passage NQ corpus over a 10-Gbps link, our protocol only adds 0.73 seconds, or 10\%, to a 128-token Qwen3-32B RAG pipeline. The released code satisfies directional metric differential privacy (DP) and substantially reduces embedding-inversion and property-inference leakage, demonstrating that a carefully learned shortlist can make private dense retrieval both accurate and practical.
\end{abstract}

\section{Introduction}
\label{sec:intro}

The proliferation of Retrieval-Augmented Generation (RAG) has transformed large language models (LLMs) from static artifacts into dynamic, knowledge-grounded reasoning engines \cite{fan2024survey,gao2023retrieval,ram2023context,lewis2020rag}. At the core of RAG lies dense retrieval: mapping documents and queries into a high-dimensional vector space and searching via nearest neighbors. As organizations deploy RAG over sensitive corpora such as legal archives, medical records, and proprietary knowledge bases, both the document collection and user queries become sensitive assets~\cite{tang2026dprag,chang2025remoterag,yao2026connect,nguyen2026five}. Recent embedding inversion attacks \cite{song2020information,morris2023text,li2023geia,huang2024transferable,chen2024text,zhang2025universal} have demonstrated that continuous embeddings are \textit{not} opaque fingerprints, but rich, invertible semantic representations from which an adversary can reconstruct sensitive source text and infer private attributes.

Existing cryptographic approaches to private retrieval struggle to balance security with the scale and dimensionality of modern dense embeddings. Homomorphic encryption (HE) and multi-party computation (MPC) protocols \cite{chang2025remoterag,li2024avpmir,ming2026p2rag} incur orders-of-magnitude of computational overhead, while ORAM-based methods \cite{zhu2025compass} require multiple interactions and high user-side computation and storage~\cite{cui2026mess}.
Trusted execution environments (TEEs) avoid homomorphic scoring,
while access-pattern~\cite{xu2015controlled} and microarchitectural leakage~\cite{vanbulck2018foreshadow} remain.

\textbf{This work.}
Our central design choice is to expose a coarse candidate pattern under directional metric differential privacy (mDP)~\cite{imola2024metric} and reserve cryptographic computation for the resulting shortlist. An honest-but-curious \emph{Corpus Owner} keeps its proprietary corpus with full-precision embeddings while additionally maintaining a lightweight binary index; an authorized \emph{User} randomizes released hash codes that satisfy stringent mDP guarantees and encrypts the clean query representation. The resulting code makes the nearby query directions induce similar candidate-pattern distributions, making them hard to distinguish from one another and thereby protecting critical detailed privacy information (\S\ref{sec:attacks}).
The \textit{Owner} performs a Hamming search on the corpus ($N$) and packed Brakerski–Fan–Vercauteren (BFV) homomorphic scoring over the resulting $K$ candidates. The \textit{User} decrypts the scores, selects locally, and uses active-secure $k$-out-of-$K$ oblivious transfer (OT) to open at most $k$ payloads without revealing its choices or accessing the other $K-k$ payloads.

Recovering retrieval quality under DP may require scoring $K=500$--$3{,}000$ candidates, while a RAG request typically releases and bills only $k=3$--$10$ documents.
The gap between $K$ and $k$ makes OT central to this deployment model. Returning the shortlist would disclose hundreds of times more content; requesting the final documents directly would reveal the User's selection.
Instead, our $k$-out-of-$K$ transfer binds each round to the billable result count, while authenticated accounts and cumulative quotas govern repeated extraction.
In our model, we assume that the \textit{User} might want to actively learn the payloads beyond its billed $k$ per round and design our protocol to handle such threats (\S\ref{sec:user-access}; Theorem~\ref{thm:payload-access}).

Three empirical insights make our design highly practical.

\noindent\ding{182}\enspace\textbf{Candidate-set redundancy persists across scales and domains.} A learned binary filter (Stage~1) needs only to preserve the documents that matter for the final ranking (Stage~2), while the ranking among the top documents is not critical. At $K=500$, the two-stage pipeline (\S\ref{sec:architecture}) retains 98.84\%--100\% of full-corpus NDCG@10 (ranking quality) across five zero-shot BEIR corpora~\cite{thakur2021beir} on both evaluated encoders. In particular, on Climate-FEVER with 5.4M documents, the shortlist excludes noisy high-scoring documents and raises NDCG@10 by 0.0012–0.0158.
Even under DP protection, expanding the candidate pool up to 3,000 achieves a good balance between efficiency and search quality (\S\ref{sec:search_quality}).
These corpora span scientific search, question answering, entity retrieval, and fact verification at scales from 25K to 5.4M documents (Figure~\ref{fig:k-scaling}), demonstrating strong generalizability of our approach.

\noindent\ding{183}\enspace\textbf{Dense embeddings contain substantial precision redundancy.} We find that scalar int8 quantization preserves the pretrained ranking closely; our symmetric zero-point-free realization converts floating-point similarity into exact, low-depth integer dot products. Without this precision reduction, private real-valued scoring relies on approximate-number HE such as CKKS~\cite{cheon2017homomorphic}, higher-precision fixed-point BFV~\cite{juvekar2018gazelle}, or interactive fixed-point MPC~\cite{mohassel2017secureml}; int8 enables exact packed BFV scoring with a small plaintext modulus while preserving high ranking quality.

\noindent\ding{184}\enspace\textbf{Model separation creates pipeline parallelism.} We adopt low-rank adaptation (LoRA) fine-tuning~\cite{hu2022lora} of pretrained encoders to create the learned hash models efficiently. This ensures that the hash codes can select a high-recall subset without incurring the storage and memory burden of an entirely new model. Our protocol implementation carefully overlaps the pretrained scoring forward and BFV query encryption with Owner-side Hamming shortlist construction, then streams the candidate ciphertexts while encrypted scoring and key transfer advance. This successfully hides the latency and makes our protocol efficient under different network bandwidths.

\begin{figure}[!b]
\centering
\includegraphics[width=0.7\columnwidth]{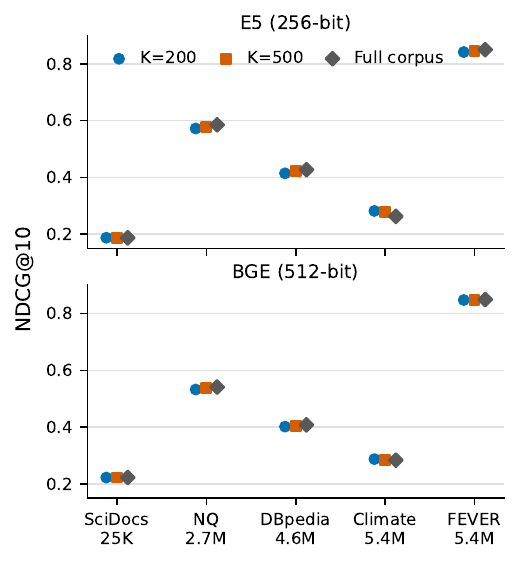}
\caption{Cross-dataset quality of our pipeline. Each panel reports absolute Stage~2 NDCG@10 at $K\in\{200,500\}$ alongside full-corpus pretrained retrieval. The five BEIR corpora span 25K--5.4M documents and four retrieval domains.}
\label{fig:k-scaling}
\end{figure}

We investigate randomized response, Gaussian, R\'enyi-DP-calibrated von Mises–Fisher (RDP-vMF), and exactly calibrated pure-vMF mechanisms that achieve mDP, extensively evaluating their privacy–utility trade-offs through retrieval quality, representation attacks, protocol latency, and end-to-end RAG.
In summary, we make the following contributions:
\begin{enumerate}
\item \textbf{A leakage-aware retrieval protocol and security contract} that reveals a metric-DP candidate pattern, confines encrypted exact scoring to $K$ candidates, protects the clean query against the Owner, and uses active-secure $k$-out-of-$K$ OT to hide the final selection and bound payload recovery.
\item \textbf{A learned candidate-filter design and training recipe} that retrofits pretrained encoders via LoRA while preserving the original model as the Stage~2 scorer. It sustains near-lossless zero-shot retrieval across domains and corpus scales and improves over direct quantization, classical hashing, and supervised binary-retrieval baselines.
\item \textbf{A directional randomization mechanism and analysis} that gives the released shortlist code pure metric-DP protection, carries the guarantee through candidate generation by post-processing, and establishes its retrieval--privacy frontier against randomized response, Gaussian, and RDP-vMF mechanisms and representation attacks.
\item \textbf{A practical two-forward system realization} that combines int8 quantization, shallow packed BFV, key-only OT, and pipeline overlap in a networked prototype, reducing cryptographic work from $N$ documents to a high-recall shortlist while retaining efficient million-scale retrieval and end-to-end RAG.
\end{enumerate}
 
\section{Background and Related Work}
\label{sec:background}

\subsection{Dense Retrieval and Learned Hashing}

RAG grounds language-model generation in retrieved evidence~\cite{fan2024survey,gao2023retrieval,ram2023context}. Its scalable first stage is usually a bi-encoder such as DPR, SBERT, E5, or BGE~\cite{karpukhin2020dpr,reimers2019sentence,wang2022e5,chen2024bge}: queries and documents are encoded independently, then compared by inner product or cosine similarity~\cite{reimers2019sentence}. Cross-encoders, instead, score query–document pairs jointly and are commonly reserved for reranking because their cost grows with the number of evaluated pairs~\cite{chiu2022cross,chen2024how,rauf2024bce}.

Binary hashing compresses continuous representations into $L$-bit codes and replaces floating-point distance with XOR and popcount~\cite{luo2023survey}. Classical methods include data-independent random-hyperplane, multi-probe, and cross-polytope LSH variants~\cite{charikar2002similarity,lv2007multiprobe,andoni2015practical,pham2022falconn}, and data-dependent rotations such as ITQ and IsoHash~\cite{gong2011itq,kong2012isotropic}. Deep models jointly learn the representation and code: earlier work emphasized image retrieval with pairwise or center-based objectives~\cite{li2015feature,zhu2016deep,liu2019deep,yuan2020central,wang2023deep}, ranking-aware hashing directly optimized tie-aware AP and NDCG in Hamming space~\cite{he2018hashing}, and recent objectives couple discrimination with quantization or adapt transformer encoders to text~\cite{hoe2021one,ou2021dhim,he2025survey}.
Instead of preserving embedding geometry and search quality, we use a learned code only as a high-recall first-stage filter and retain the continuous representation for exact candidate reranking. This division gives the Owner a lightweight coarse index while reducing private similarity evaluation from all $N$ documents to $K$ candidates.

\subsection{Representation Privacy}

Dense embeddings preserve enough lexical and semantic information to support generation-based, search-based, and property-inference attacks~\cite{song2020information,morris2023text,li2023geia,huang2024transferable,chen2024text,zhang2025universal}. Our design randomizes the normalized pre-binarization representation under metric DP~\cite{imola2024metric,xie2025decade} and then deterministically binarizes it; the released code inherits the same privacy bound by post-processing~\cite{dwork2014foundations}.

Dense retrievers normalize embeddings and rank them by cosine similarity, so their semantic neighborhoods lie naturally on the unit sphere and are measured by angle. We therefore instantiate metric DP over nearby directions of the learned coarse representation~\cite{dwork2006calibrating,dwork2014foundations,chatzikokolakis2013broadening}.

\begin{definition}[$(\varepsilon,\delta)$-Directional Privacy]
\label{def:directional-privacy}
Let $d$ be a metric on the unit sphere and let $\rho>0$. A randomized mechanism $\mathcal{M}:\mathbb{S}^{L-1}\rightarrow\mathcal{Y}$ satisfies $(\varepsilon,\delta)$-directional privacy at radius $\rho$ if, for every $u,u'\in\mathbb{S}^{L-1}$ \underline{with $d(u,u')\le\rho$} and every measurable $S\subseteq\mathcal{Y}$,
\begin{equation}
\Pr[\mathcal{M}(u)\in S]\le e^{\varepsilon}\Pr[\mathcal{M}(u')\in S]+\delta.
\end{equation}
\end{definition}

This \underline{radius-bounded} definition instantiates metric privacy~\cite{chatzikokolakis2013broadening,xie2025decade} on normalized directions. We use angular distance $d_\theta(u,u')=\arccos(u^\top u')$ as the primary metric and chord distance $d_c(u,u')=\|u-u'\|_2=2\sin(d_\theta(u,u')/2)$ when deriving the vMF density-ratio bound. The radius defines a neighborhood in representation space; an empirical query-pair distance study can further calibrate that neighborhood to paraphrase or intent-level relations.

Prior private-hashing methods randomize discrete codes with randomized response or randomize data-independent LSH functions~\cite{wang2020searching,fernandes2021dplsh,cui2026mess}. Deep hashing instead exposes the continuous pre-sign representation as a natural randomization point, preserving coordinate margins and directional geometry. Gaussian perturbation operates on bounded $h\in[-1,1]^L$ under Euclidean adjacency~\cite{balle2018gaussian}, while von Mises--Fisher (vMF) perturbation operates on normalized $u=h/\|h\|_2$ under angular or chordal adjacency~\cite{weggenmann2021directional}; binarization then preserves their guarantees by post-processing~\cite{dwork2014foundations}.

Randomized response treats every bit alike and must compose privacy across the complete code because nearby continuous vectors can cross many low-margin sign boundaries. At $L=256$ and $(\varepsilon=16,\delta=10^{-6})$, this calibration flips 46.1\% of the bits and yields only 0.0306 mean NDCG@10 at $K=3000$, versus 0.5360 for Gaussian and 0.5367 for RDP-vMF (Table~\ref{tab:dp-retrieval}).

\subsection{Private Retrieval across Deployments}

The ownership and trust boundary define the private-retrieval problem in different scenarios~\cite{yaluhin2026sok}. Table~\ref{tab:private-retrieval-systems} organizes prior systems into four deployment models. 

In \emph{client-owned outsourcing}, classical searchable encryption~\cite{curtmola2006sse,chen2018dpaccess,shang2021osse} and vector systems~\cite{liu2025ppann,zhu2025compass,cui2026mess} protect a corpus owner who queries an untrusted cloud. For example, MESS~\cite{cui2026mess} randomizes LSH codes, maintains 64 HNSW shards with each item routed to 16, and reranks the aggregated candidates at the client; this yields 16$\times$ indexed-record replication. Retrieving the original documents from the cloud additionally requires private information retrieval (PIR)~\cite{chor1998private,ulukus2022private} to hide the access pattern.

In the \emph{public/shared-corpus} model, Tiptoe, Wally, PACMANN, and Speakeasy protect the query without a corpus-confidentiality goal~\cite{henzinger2023tiptoe,asi2024wally,zhou2025pacmann,lakshman2026speakeasy}. For example, Tiptoe~\cite{henzinger2023tiptoe} combines clustering with linearly homomorphic search across 45 servers to operate vector search on web-scale corpus, but its single-cluster pruning for efficiency significantly impacts the search quality compared to normal embedding search. PACMANN~\cite{zhou2025pacmann} improves this quality--latency tradeoff through graph search and client-preprocessing PIR, but with 100 million vectors, each client downloads 59.6~GB during setup, stores 2.9~GB of state, and exchanges another 399.4~MB to maintain that state after every query.

In \emph{secret-shared outsourcing}, PRAG and P$^2$RAG protect distinct data owners and queriers by placing the database across MPC servers~\cite{zyskind2024prag,ming2026p2rag}; PRAG assumes an honest majority, whereas P$^2$RAG uses two semi-honest non-colluding servers and avoids secure sorting through interactive bisection, but requires full-corpus work and trusted-dealer preprocessing, the cost of which is excluded from its benchmark.

Our target is the fourth model, a \emph{provider-held proprietary corpus} serving an external querier~\cite{servanschreiber2022private,chen2020sanns,li2025panther,chang2025remoterag,liang2026pisces}. Pisces~\cite{liang2026pisces} combines an oblivious SimHash filter with MPC scoring and PIR-to-share, and adds a cryptographic BM25 path; our learned filter releases a directionally private shortlist and concentrates cryptography on exact candidate scoring and selected payloads. PANTHER~\cite{li2025panther} co-designs PIR, secret sharing, garbled circuits, and homomorphic encryption for strong single-provider protection, but its cluster-wise PIR representation scales with both vector dimensionality and the number of probes. Its evaluation targets 96- and 128-dimensional vision embeddings, whereas modern RAG commonly uses substantially wider text embeddings and demands enough probes to preserve near-lossless retrieval quality. In our evaluation (\S\ref{sec:efficiency}), PANTHER exceeds the memory limit of our server in a million-scale corpus. Within this model, our protocol keeps the corpus at a single provider for the owner, provides formal DP guarantees for the search pattern, cryptographically protects unselected corpus content from the user, and confines homomorphic scoring to the filtered candidates for efficiency and near-lossless quality.

\begin{table*}[t]
\centering
\caption{Private retrieval systems grouped by corpus ownership, querier role, and trusted infrastructure.}
\label{tab:private-retrieval-systems}
\footnotesize
\setlength{\tabcolsep}{2.5pt}
\renewcommand{\arraystretch}{1.00}
\begin{tabular}{@{}p{3.30cm}p{2.05cm}p{0.8cm}*{5}{>{\centering\arraybackslash}p{1.10cm}}p{2.90cm}@{}}
\toprule
Scheme & Search & Infra. & \shortstack{Query\\privacy} & \shortstack{Content\\privacy} & \shortstack{Pattern\\privacy} & \shortstack{Near-\\lossless} & \shortstack{Fast\\online} & Index / auxiliary state \\
\midrule
\multicolumn{9}{@{}l}{\textit{Client-owned outsourcing---corpus owner is querier; cloud is adversary}} \\
CGKO06~\cite{curtmola2006sse} & Keyword & 1S & \yesmark & -- & \nomark & -- & -- & Inverted index \\
CLRZ18~\cite{chen2018dpaccess}, SOPK21~\cite{shang2021osse} & Keyword & 1S & \yesmark & -- & \dpmark & -- & -- & DP index \\
LZXL25~\cite{liu2025ppann} & Graph & 1S & \partialmark & -- & \nomark & \partialmark & \yesmark & Graph + 2 CT \\
Compass~\cite{zhu2025compass} & Graph & 1S & \yesmark & -- & \yesmark & \yesmark & \partialmark & 3.2--6.8$\times$ server \\
MESS~\cite{cui2026mess} & Multi-graph & 1S & \dpmark & -- & \dpmark & \partialmark & \partialmark & 16$\times$ HNSW index\\
\addlinespace[1pt]
\multicolumn{9}{@{}l}{\textit{Public/shared corpus---no corpus-confidentiality goal}} \\
Tiptoe~\cite{henzinger2023tiptoe} & $k$-means & 45S & \yesmark & -- & \partialmark & \nomark & \nomark & Cluster index \\
Wally~\cite{asi2024wally} & $k$-means & Crowd & \dpmark & -- & \dpmark & \nomark & \nomark & Cluster index \\
PACMANN~\cite{zhou2025pacmann} & Graph & 1S+P & \yesmark & -- & \partialmark & \partialmark & \nomark & Graph + client hints \\
\addlinespace[1pt]
\multicolumn{9}{@{}l}{\textit{Secret-shared outsourcing---distinct owner and querier}} \\
PRAG~\cite{zyskind2024prag} & IVF & MPC & \yesmark & \yesmark & \yesmark & \partialmark & \nomark & Database shares \\
P$^2$RAG~\cite{ming2026p2rag} & Full scan & 2NC & \yesmark & \partialmark & \yesmark & \yesmark & \nomark & Two DB shares \\
\addlinespace[1pt]
\multicolumn{9}{@{}l}{\textit{Provider-held proprietary corpus---external querier}} \\
Pisces~\cite{liang2026pisces} & \shortstack{SimHash + BM25} & 1S & \yesmark & \yesmark & \partialmark & \partialmark & \nomark & $160N$-CT + token OKVS \\
SANNS~\cite{chen2020sanns} & $k$-means & 1S & \yesmark & \yesmark & \yesmark & \partialmark & \nomark & DORAM \\
PANTHER~\cite{li2025panther} & $k$-means & 1S & \yesmark & \yesmark & \yesmark & \partialmark & \nomark & PIR + MPC \\
RemoteRAG~\cite{chang2025remoterag} & ANN & 1S & \dpmark & \partialmark & \dpmark & \yesmark & \partialmark & ANN index \\
\textbf{Ours} & Hash scan & 1S & \dpmark & \textbf{\yesmark} & \dpmark & \textbf{\yesmark} & \textbf{\yesmark} & \textbf{32 B/doc filter} \\
\bottomrule
\end{tabular}
\vspace{1pt}
\parbox{0.985\textwidth}{\scriptsize \textbf{Legend.} \yesmark: cryptographic protection or full support; \dpmark: formal differential privacy guarantee; \partialmark: empirical or partial protection/support; \nomark: unsupported; --: inapplicable. Query privacy protects the query text and clean embedding from the search service. Content privacy limits the querier's plaintext payload recovery to its authorized results. Pattern privacy separately protects the service-visible retrieval trace: the search pattern reveals whether queries repeat, while the access pattern reveals which corpus items are touched or selected. Near-lossless denotes exact retrieval or at least 99\% of the matched plaintext quality. Fast online denotes practical reported query-time latency at the evaluated scale; workloads and hardware differ. The final column reports each paper's native search structure beyond the embeddings. Our 32 B/doc figure is the 256-bit coarse hash index. 1S: one server; 2NC: two non-colluding servers; CT: ciphertext representation; OKVS: oblivious key--value store; P: per-client, database-dependent PIR preprocessing.}
\end{table*}

\section{Deployment and Threat Model}
\label{sec:protocol}

We target a proprietary corpus served directly by its Owner to an external authorized User. This section defines the two parties, states the information visible on each side, and introduces the attacks used to measure the released coarse code.

\subsection{Parties and Trust Relations}
\label{sec:parties}

\textbf{Corpus Owner (Server).} Holds the document collection, the binary hash index, the normalized embeddings used as plaintext HE operands, per-document content keys, and document payloads protected by authenticated encryption with associated data (AEAD). The Owner is \emph{honest-but-curious}: it follows the prescribed computation and message schedule while attempting to infer query content, link queries, or profile Users from its protocol view. All server-side retrieval runs on Owner-controlled infrastructure.

\textbf{User (Client).} Holds a query, the agreed encoder and the hash model, the HE secret/public keys, and the DP parameters. The HE scoring guarantee applies to a conforming User that encrypts the prescribed bounded, canonically packed query; the active-secure OT guarantee additionally covers a malicious receiver attempting to recover more than $k$ content keys.

\subsection{Views, Leakage, and Assumptions}
\label{sec:threat}

The protocol has four security goals. Metric DP protects the coarse query code released to the Owner, BFV semantic security protects the clean query from the Owner, ciphertext-simulatable BFV restricts a conforming User's scoring view to the prescribed $K$ exact scores, and active-secure OT hides the User's selected positions while limiting payload-key recovery to $k$ OT choices.

The Owner observes:
\begin{itemize}
\item Binary index $\{\bcode_d\}_{d \in [N]}$ and the resulting candidate set $C_K$;
\item Metric-DP coarse query code $\tilde{\bcode}_q$;
\item The HE ciphertext $\mathsf{Enc}(\bar{\mathbf{q}})$;
\item The HE and sender-side OT transcripts;
\item Authenticated identity, session and round identifiers, message lengths and timing, $K$, and the public result count $k$.
\end{itemize}
The Owner does \emph{not} observe the User's plaintext query, the decrypted similarity scores, or which specific $k$ indices the User selected via OT.

The User observes:
\begin{itemize}
\item The $K$ scalar similarity scores;
\item The $K$ AEAD payload ciphertexts and the $k\times K$ masked content-key table for the candidate set;
\item At most $k$ content keys and the corresponding plaintext document payloads.
\end{itemize}
Under conforming execution, the application gives the User candidate-local scores and selected payloads while keeping the binary index, explicit corpus embeddings, Owner document identifiers, and unselected content keys within the Owner process. The compact randomized-evaluation path makes the evaluated ciphertext simulatable from the query ciphertext, the $K$ scores, and public metadata, so its other slots and coefficients add no corpus-embedding information beyond this explicit score oracle (\S\ref{sec:bfv-correctness}).

The security model assumes an authenticated confidential transport, under which a network observer learns message lengths and timing. Our measurement harness uses versioned framed TCP to expose and measure these metadata costs; a deployment places the same frames and the OT connection inside authenticated encrypted channels.

\emph{Assumptions and scope.} A trusted model-distribution step fixes the encoder, hash head, quantization parameters, and HE parameters shared by both parties. The HE scoring theorems apply to fresh symmetric BFV encryptions of bounded, canonically packed queries; the score quota governs cumulative exposure but does not establish consistency between the encrypted query and the coarse code. The guarantees assume uncompromised endpoints, authenticated identities, and authenticated confidential transport, with message lengths and timing treated as explicit leakage. They cover query privacy, exact and ciphertext-simulatable scoring for conforming inputs, and per-round payload access; Sybil resistance, availability, inference from the released exact scores, and malformed-ciphertext server privacy remain outside the model.

\subsection{Empirical Privacy Attacks}
\label{sec:attacks}

We evaluate three attacks that recover text or sensitive attributes from an exposed representation. DP mechanism experiments target the released Stage~1 code, while the no-DP learned code and float embedding provide reference points.

\textbf{Search-based inversion.} ZSInvert~\cite{zhang2025universal} treats reconstruction as black-box optimization. Given a target representation $r$, an LLM proposes a beam of candidate texts, the target encoder maps each candidate to the same representation space, and similarity to $r$ selects the next beam. The best candidate then seeds another refinement round. Float targets use cosine similarity, whereas binary targets use normalized hash similarity; the latter supplies only $L+1$ distinct Hamming-similarity values. The output is the highest-scoring reconstructed text.

\textbf{Generation-based inversion.} GEIA~\cite{li2023geia} learns an embedding-conditioned autoregressive decoder from auxiliary text--representation pairs. A learned projection maps the target representation into the decoder input space, and teacher-forced language-model training maximizes $\prod_t p_\phi(x_t\mid x_{<t},r)$. At inference time, the trained decoder reconstructs each held-out target in one generation pass.

\textbf{Property inference.} The attacker obtains an auxiliary set of representation--attribute pairs~\cite{song2020information}, trains a classifier to predict the attribute from the exposed representation, and applies the selected classifier to held-out victim representations. We evaluate topic, sentiment, and authorship because they span coarse semantic content, affect, and fine-grained source identity. This attack can succeed without reconstructing the original text.

Section~\ref{sec:security_eval} reports attack outcomes, and the experimental setup specifies the models, datasets, splits, search budgets, and metrics. Appendix~\ref{app:attacks} records the remaining optimization details. These experiments isolate representation leakage; \S\ref{sec:threat} separately accounts for candidate identities and cross-round linkage in the Owner's protocol view.

\section{Deep Hash Learning}
\label{sec:hashing}

The learned hash encoder turns a pretrained dense retriever into a high-recall candidate filter, while the original pretrained encoder remains the Stage~2 scorer. We focus here on the model architecture and the training signals that make this separation effective; Appendix~\ref{app:setup} gives the exact losses, discretization schedule, and optimization parameters.

\subsection{Motivation and Architecture}
\label{sec:hashing_motivation}
A key challenge in deep hashing is preserving zero-shot candidate recall after adapting a pretrained encoder to a discrete space~\cite{he2025survey}. For text $x$, our hash model applies a linear head to a LoRA-adapted encoder~\cite{hu2022lora} and emits
\begin{equation}
\bcode(x)=\sign\!\left(W\,\mathrm{pool}(E_{\mathrm{LoRA}}(x))\right)\in\{-1,1\}^{L}.
\label{eq:hash-model}
\end{equation}
Here $W\in\mathbb{R}^{L\times d}$, where $d$ is the encoder hidden dimension and $L$ is the code length. We use mean pooling and $L=256$ for E5-base-v2~\cite{wang2022e5}, and [CLS]-token pooling and $L=512$ for BGE-base-en-v1.5~\cite{chen2024bge}. The linear head and compact LoRA update specialize the pretrained representation for Hamming candidate recall without training another backbone from scratch.

Candidate generation and scoring use separate model states. The adapted encoder and hash head produce only the coarse code, while the unchanged pretrained encoder supplies the continuous query and document embeddings used by Stage~2. This separation lets Stage~1 reshape its geometry for Hamming search without shifting the final dense ranking. Online, both forwards reuse tokenization and input transfer, and the pretrained scoring forward overlaps Owner-side Hamming search as described in \S\ref{sec:online}.

\subsection{Encoder Tuning}
\label{sec:tuning}

We train the hash model on MS MARCO query--passage supervision using three functional groups:
\begin{equation}
\mathcal{L}=\mathcal{L}_{\mathrm{retrieval}}+\mathcal{L}_{\mathrm{transfer}}+\mathcal{L}_{\mathrm{regularization}}.
\label{eq:hash-objective-groups}
\end{equation}
Direct retrieval supervision brings relevant query--passage pairs together and pushes mined negatives away in the adapted continuous space; E5 additionally applies this supervision directly in Hamming space. Ranking transfer carries the adapted encoder's ordering into the deployed Hamming space. The remaining regularizers anchor the adapted representation to the frozen pretrained geometry and keep examples well spread before binarization. Appendix~\ref{app:setup} defines every component of Equation~\ref{eq:hash-objective-groups} and reports its model-specific weight.

\subsection{Hard-Negative Training}
\label{sec:hashing_optimization}
A hard negative is a non-relevant passage that remains deceptively close to the query, making it more informative than a random passage for learning the candidate boundary. We mine these examples only from the MS MARCO training corpus: a broad lexical retrieval stage finds plausible candidates, a stronger reranker orders them, and training samples from the highest-ranked non-relevant passages. The reranker also suppresses likely unlabeled positives so that ambiguous passages do not become contradictory supervision. This source-only procedure teaches the hash model to preserve fine distinctions without adapting to any evaluation corpus.

E5 and BGE each train for 16 epochs. Training moves progressively from a smooth representation to the binary codes used at deployment; Appendix~\ref{app:setup} specifies this schedule together with the mining models, sample counts, learning rates, and remaining hyperparameters.



\section{DP-Filtered Private Dense Retrieval}
\label{sec:architecture}

We now describe the complete two-party protocol. Throughout, $K$ denotes the candidate budget (the number of documents that receive HE scoring) and $k$ denotes the final result count returned to the User.

\subsection{Offline Setup}
\label{sec:setup}

\textbf{Owner setup.} The Owner runs two document encoders offline. The LoRA-adapted hash encoder and linear hash head produce $\bcode_i\in\{-1,1\}^L$ for candidate generation, while the unchanged pretrained encoder produces the normalized scoring embedding $\mathbf{z}_i\in\mathbb{R}^d$. A shared symmetric quantizer with zero point 0 and scale $a=\max_{i,j}|z_{i,j}|/127$ maps the pretrained embeddings to $\{-127,\ldots,127\}^d$; the symmetric range excludes $-128$ and bounds every integer dot product by $127^2d$. The Owner stores:
\begin{itemize}
\item A flat binary index $\{\bcode_i\}_{i=1}^N$ for Hamming-distance search;
\item The quantized embeddings $\{\bar{\mathbf{z}}_i\}_{i=1}^N$ as Owner-local plaintext HE operands;
\item A random 128-bit content key $\kappa_i$ and an AEAD ciphertext $P_i$ for each document. The plaintext contains its true-length field and is zero-padded to a 4096-byte boundary before one-shot ChaCha20--Poly1305 encryption under a 256-bit key derived from $\kappa_i$ with HKDF--SHA-256, so ciphertext length reveals only the padded block count.
\end{itemize}

\textbf{User and session setup.} For encrypted candidate scoring, we 
instantiate single-instruction multiple-data (SIMD) batched BFV for the quantized integer dot products~\cite{fan2012somewhat,homomorphicEncryptionStandard2018}, packing multiple score lanes into each ciphertext. The User obtains the LoRA-merged hash encoder, hash head, original pretrained encoder, tokenizer, and quantizer. It generates the BFV secret/public keys and required Galois keys, retains the secret key, and sends only public and evaluation material to the Owner. An authenticated session binds an identity, layout, $K$, $k$, and monotone round counter. The two parties establish the base-OT correlation state once per long-lived OT connection; each query advances the extension state to derive fresh rows rather than repeating base OT.

\subsection{Online Protocol}
\label{sec:online}

Figure~\ref{fig:protocol-overview} summarizes the eight online stages described below; Appendix~\ref{app:protocol-sequence} provides the complete message sequence.

\begin{figure*}[t]
\centering
\includegraphics[width=0.75\linewidth]{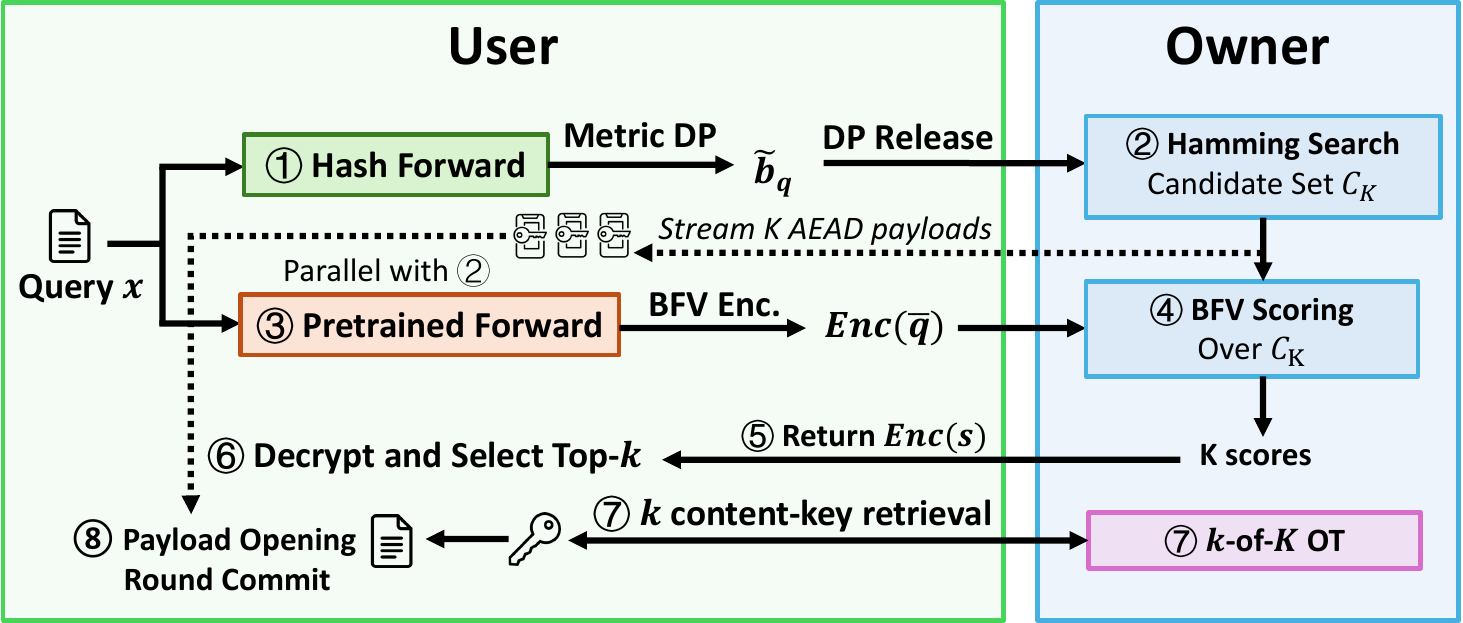}
\caption{\textbf{Online retrieval pipeline.} Metric-DP Hamming shortlisting overlaps the pretrained scoring forward and query encryption. Once $C_K$ is fixed, its AEAD ciphertexts stream during exact BFV scoring.}
\label{fig:protocol-overview}
\end{figure*}

For each query $x$, the following steps are executed:

\begin{enumerate}
\item \textbf{Hash forward and DP release.} The User tokenizes $x$ once, transfers the retained token tensors to the GPU once, and first runs the LoRA-merged hash encoder. The hash head produces logits $\mathbf{z}_q\in\mathbb{R}^L$; this path does not compute or return an unused continuous scoring vector.
Then, using the final training scale $\beta$, the User computes the bounded pre-binarization vector $\mathbf{h}_q^{\text{soft}}=\tanh(\beta\mathbf{z}_q)$, normalizes it to $\mathbf{u}_q=\mathbf{h}_q^{\text{soft}}/\|\mathbf{h}_q^{\text{soft}}\|_2$, samples $\mathbf{y}_q\sim\mathrm{vMF}(\mathbf{u}_q,\kappa)$ with the pure calibration in Equation~\ref{eq:vmf-calibration}, and sends the round-bound coarse frame containing $\tilde{\bcode}_q=\sign(\mathbf{y}_q)$.

\item \textbf{Hamming search and payload streaming.}  The Owner validates the received frame $\tilde{\bcode}_q$ and starts Hamming search over $\{\bcode_i\}_{i=1}^N$. Once $C_K=(i_1,\ldots,i_K)$ is fixed, it atomically reserves the query and per-document score exposure and immediately streams the round-bound AEAD ciphertexts $(P_{i_1},\ldots,P_{i_K})$ in candidate order.

\item \textbf{Concurrent pretrained forward and BFV encryption.} While the Owner computes $C_K$, the User runs the unchanged pretrained encoder on the retained token tensors, producing $\mathbf{q}=\mathsf{Normalize}(E_{\mathrm{pre}}(x))$. The User quantizes $\mathbf{q}$ with the shared scale, encrypts it as $\mathsf{Enc}(\bar{\mathbf{q}})$, and sends a separate scoring-query frame bound to the same session and round. Thus the serial prefix is the hash forward followed by $\max\{T_{\mathrm{Hamming}},T_{\mathrm{pre}}+T_{\mathrm{enc}}\}$, and payload transfer begins as soon as the Hamming branch produces $C_K$.

\item \textbf{Packed BFV scoring.} Once both $C_K$ and $\mathsf{Enc}(\bar{\mathbf{q}})$ are ready, the Owner gathers the pretrained quantized candidate matrix $\bar{\mathbf{Z}}_K$ and computes $\mathsf{Enc}(\mathbf{s})=\bar{\mathbf{Z}}_K\mathsf{Enc}(\bar{\mathbf{q}})$ using plaintext--ciphertext multiplication and rotation-based sum reduction while payload streaming continues. At polynomial degree 8192, the segmented \emph{multi} layout places eight candidates in 1024-slot segments per result ciphertext at multiplicative depth 1, while the deployed \emph{compact} layout collects up to 8192 scores in one ciphertext and applies PMultE/Rand randomized evaluation~\cite{hwang2025ciphertext}. This path clears non-score slots and makes the evaluated ciphertext simulatable from the prescribed scores while preserving exact integer outputs (\S\ref{sec:bfv-correctness}); it requires neither ciphertext--ciphertext multiplication nor bootstrapping.

\item \textbf{Encrypted-score return.} The Owner returns the encrypted scores in the same local order as $C_K$. Candidate position $j\in[K]$ is the only selection coordinate exposed to the User; Owner document identifiers are never transmitted.

\item \textbf{Decrypt and select top-$k$.} The User decrypts the exact integer score vector $\mathbf{s}$ and chooses local positions $c_1,\ldots,c_k\in[K]$.

\item \textbf{Active-secure $k$-out-of-$K$ key retrieval.} The parties execute a batch of $k$ active-secure 1-out-of-$K$ Orrù--Orsini--Scholl (OOS) OT-extension transfers~\cite{orru2016actively}, one for each $c_j$. The receiver obtains one 128-bit OT key per row, while the sender derives $K$ option keys per row and masks every candidate content key in a $k\times K$ table. The Owner sends this masked-key table after OT; row $j$ and column $c_j$ reveal $\kappa_{i_{c_j}}$ to the User, which opens the corresponding buffered payload. The OT structure limits a malicious receiver to at most one content key per row and hides each $c_j$ from the Owner.

\item \textbf{Payload opening and round commit.} After authenticating and decrypting the selected AEAD payloads, the User sends a round-bound completion frame and the Owner commits the quota reservation. Any protocol or malicious-check failure poisons the affected daemon state and releases an uncommitted reservation.
\end{enumerate}

\subsection{Design Rationale}
\label{sec:rationale}

\textbf{Metric DP.} Hamming search requires the Owner to receive a coarse query code, which exposes the query's neighborhood and creates matching, property-inference, and linkability channels. Metric DP gives this released code a distance-calibrated indistinguishability guarantee while preserving efficient plaintext filtering.

\textbf{Single-release utility recovery.} MESS~\cite{cui2026mess} recovers recall after discrete randomized response through 64 HNSW shards with separately trained IsoHash mappings, routing each item to 16 shards and searching every shard for candidates. This raises a relevant item's recovery probability to $1-(1-P_{\mathrm{hit}})^{16}$; however, the hash codes appearing in 16 releases also compose privacy leakage across shards, leading to extremely large $\epsilon$ and almost null formal guarantee. Our learned filter, instead, releases one metric-DP code and recovers utility by enlarging $K$, leading to a good balance between privacy, efficiency, and search quality.

Table~\ref{tab:coarse-neighborhood-case} gives an operational view of this coarse release. The closest DP-Hamming codes match isolated cues such as London, poppies, or tower without identifying the requested fact. The answer-bearing passage appears only at Hamming rank 1,439; encrypted clean-query scoring promotes it to rank 5, inside the User's hidden $k=10$ selection. The large candidate set therefore preserves the answer while separating coarse code proximity from precise semantic relevance. Appendix~\ref{app:candidate-cases} presents three additional queries spanning factual counts, locations, and calendar rules.

\begin{table}[t]
\small
\centering
\caption{One NQ query (\texttt{test1054}) under the E5 pure-vMF operating point ($\varepsilon=64$, $K=3000$, $k=10$, 256 bits). The DP-Hamming neighbors expose a broad mixture of query cues; Stage~2 recovers the answer-bearing passage.}
\label{tab:coarse-neighborhood-case}
\begin{tabular}{@{}p{0.17\columnwidth}p{0.16\columnwidth}p{0.57\columnwidth}@{}}
\toprule
View & Rank / $d_H$ & Text excerpt \\
\midrule
Target & -- & \emph{Who made the poppies at Tower of London?} \\
DP-Ham. & 1 / 65 & \emph{10 Downing Street:} ``The terrace and garden were constructed in 1736~\ldots'' \\
DP-Ham. & 2 / 69 & \emph{Anzac Day:} ``Paper poppies are widely distributed~\ldots'' \\
DP-Ham. & 3 / 70 & \emph{Eiffel Tower:} ``26 December 1888: Construction of the upper stage.'' \\
Stage~2 & 1439$\rightarrow$5 / 88 & \emph{Blood Swept Lands and Seas of Red:} ``The artist was Paul Cummins, with setting by stage designer Tom Piper.'' \\
\bottomrule
\end{tabular}
\end{table}

\textbf{Shallow HE.} The scoring stage is a plaintext--ciphertext matrix--vector multiplication over pre-normalized vectors. Our exact-integer BFV path uses SIMD multiplication and rotation-based reduction at depth 1 in the multi layout or depth 2 in the communication-oriented compact layout. Local decryption handles top-$k$, eliminating encrypted comparison, ciphertext--ciphertext multiplication, and bootstrapping. The deployed compact path also clears non-output slots and randomizes evaluation according to ciphertext-simulatable BFV~\cite{hwang2025ciphertext}: BFV hides the query from the Owner, while the randomized output reveals no Owner operand information to a conforming secret-key User beyond the $K$ prescribed scores.

\textbf{Key-only OT.} Returning all $K$ documents would disclose the full candidate payload set, whereas transferring full documents inside OT would make the active-secure OT payload proportional to document size. The protocol therefore sends fixed-size 128-bit content keys through $k$ 1-out-of-$K$ choices and delivers AEAD ciphertexts on the ordinary channel. This composition keeps OT small, hides the selected positions from the Owner, and caps payload decryption at $k$ choices; authenticated quotas separately govern the deliberately released score vector.

\textbf{Owner-local execution.} The Owner already holds the corpus and can retain the binary index on its own infrastructure. The index occupies $NL/8$ bytes (e.g., 283~MB for $N{=}8.84$M at $L{=}256$), allowing the Hamming scan and HE scoring to run without a query-processing intermediary.

\section{Security Contract and Protocol Guarantees}
\label{sec:dp}

This section establishes the protocol's four guarantees. \textbf{Views} (\S\ref{sec:security-contract}) defines the information released to each party. \textbf{Query privacy} (\S\ref{sec:directional-release}) proves metric privacy of the candidate pattern and computational privacy of the Owner's complete view. \textbf{Scoring privacy} (\S\ref{sec:bfv-correctness}) proves exact BFV scores and ciphertext simulatability for a conforming User. \textbf{Payload access} (\S\ref{sec:user-access}) formalizes the score oracle and the $k$-payload bound.

\subsection{Explicit Per-Party Views}
\label{sec:security-contract}

For one round, the Owner's explicit leakage is
\begin{equation}
\begin{aligned}
\mathcal{L}_O=(&\mathsf{id},\mathsf{session},\mathsf{round},K,k,\mathsf{layout},\mathsf{lengths},\\
&\mathsf{timing},\tilde b_q,C_K,\mathsf{accept/reject}).
\end{aligned}
\label{eq:owner-leakage}
\end{equation}
Adjacent executions fix all fields in Equation~\ref{eq:owner-leakage} except the DP output \((\tilde b_q,C_K,\mathsf{accept/reject})\). Theorem~\ref{thm:owner-view} accounts computationally for the BFV and OT transcripts in the real view.

The application-level disclosure to a conforming User is
\begin{equation}
\mathcal{L}^{\mathrm{target}}_U=(K,\mathbf{s},\{|P_i|:i\in C_K\},\{D_i:i\in S,\ |S|\le k\},\mathsf{linkage}),
\label{eq:user-leakage}
\end{equation}
Here \(\mathbf{s}\) is the candidate-local score vector, \(P_i\) is a padded payload ciphertext, and \(\mathsf{linkage}\) records equality of recurring ciphertexts. Theorems~\ref{thm:he-user-view} and~\ref{thm:payload-access} realize Equation~\ref{eq:user-leakage} for HE scoring and payload recovery, respectively.

\subsection{Directional Candidate Privacy and Query Privacy}
\label{sec:directional-release}

For query \(x\), let \(h(x)=\tanh(\beta z(x))\in[-1,1]^L\) and \(u(x)=h(x)/\|h(x)\|_2\in\mathbb{S}^{L-1}\). Given \(u\), the mechanism samples \(Y\sim\mathrm{vMF}(u,\kappa)\), whose density is \(p_u(y)=C_L(\kappa)\exp(\kappa u^\top y)\), and releases \(M_\kappa(u)=\sign(Y)\).

\begin{theorem}[Pure directional metric privacy]
\label{thm:vmf-pure}
For every \(\kappa\ge0\), every \(u,u'\in\mathbb{S}^{L-1}\), and every output event \(S\),
\begin{equation}
\Pr[M_\kappa(u)\in S]\le \exp\!\left(\kappa d_c(u,u')\right)\Pr[M_\kappa(u')\in S].
\label{eq:vmf-metric}
\end{equation}
Consequently, for angular radius \(\rho\in(0,\pi]\), choosing
\begin{equation}
\kappa=\frac{\varepsilon}{2\sin(\rho/2)}
\label{eq:vmf-calibration}
\end{equation}
gives \((\varepsilon,0)\)-directional privacy within that radius.
\end{theorem}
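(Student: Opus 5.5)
The argument is a pointwise density-ratio bound on the continuous vMF sample, followed by post-processing for the sign map and then the calibration.

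\textbf{Step 1: the density ratio.} First I bound the density ratio of the unreleased continuous sample \(Y\). For fixed \(\kappa\) the normalizer \(C_L(\kappa)\) does not depend on the mean direction, so for every \(y\in\mathbb{S}^{L-1}\)
\begin{equation*}
\frac{p_u(y)}{p_{u'}(y)}=\exp\!\bigl(\kappa (u-u')^\top y\bigr)\le \exp\!\bigl(\kappa\|u-u'\|_2\|y\|_2\bigr)=\exp\!\bigl(\kappa d_c(u,u')\bigr).
\end{equation*}
This uses Cauchy--Schwarz and \(\|y\|_2=1\). The case \(\kappa=0\) is trivial, since both laws are then uniform.

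\textbf{Step 2: integrating over events.} Next I integrate the bound over an arbitrary measurable set \(T\subseteq\mathbb{S}^{L-1}\). This gives \(\Pr[Y_u\in T]\le e^{\kappa d_c(u,u')}\Pr[Y_{u'}\in T]\), where \(Y_u\sim\mathrm{vMF}(u,\kappa)\) and \(Y_{u'}\sim\mathrm{vMF}(u',\kappa)\).

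\textbf{Step 3: post-processing.} For an output event \(S\subseteq\{-1,1\}^L\), I take \(T=\sign^{-1}(S)\). The sign map is measurable, since its fibers are orthant intersections with the sphere. Ties at zero coordinates form a null set and can be assigned by any fixed convention. Hence \(\Pr[M_\kappa(u)\in S]=\Pr[Y_u\in T]\), and Equation~\ref{eq:vmf-metric} follows. This is the standard post-processing argument~\cite{dwork2014foundations}. No \(\delta\) appears, because the bound holds pointwise rather than only in expectation.

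\textbf{Step 4: calibration.} For \(d_\theta(u,u')\le\rho\le\pi\), the chord distance is \(d_c=2\sin(d_\theta/2)\). Since \(\sin\) is increasing on \([0,\pi/2]\), this gives \(d_c\le 2\sin(\rho/2)\). Substituting Equation~\ref{eq:vmf-calibration} then yields \(\kappa d_c\le\varepsilon\), which is \((\varepsilon,0)\)-directional privacy in the sense of Definition~\ref{def:directional-privacy}.

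\textbf{Expected obstacle.} There is no real obstacle; the argument is short. The one point to state carefully is that \(C_L(\kappa)\) depends only on \(\kappa\) and \(L\), so it cancels in the ratio. That is where the rotation invariance of the uniform measure on the sphere enters.
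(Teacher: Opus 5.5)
Your proposal is correct and follows the paper's proof step for step. Both arguments cancel the direction-independent vMF normalizer to get the pointwise bound $\kappa(u-u')^\top y\le\kappa\|u-u'\|_2$, integrate it over events, and invoke post-processing for the sign map. Both then finish with $d_c=2\sin(d_\theta/2)\le 2\sin(\rho/2)$. You additionally spell out details the paper leaves implicit: Cauchy--Schwarz with $\|y\|_2=1$, measurability of the sign preimage, and monotonicity of $\sin$ on $[0,\pi/2]$.
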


\begin{proof}
The vMF normalizer is independent of its mean direction, so for every \(y\), \(\log(p_u(y)/p_{u'}(y))=\kappa(u-u')^\top y\le\kappa\|u-u'\|_2\). Integration gives the bound for \(Y\), and sign binarization preserves it by post-processing. Finally, \(d_c(u,u')=2\sin(d_\theta(u,u')/2)\le2\sin(\rho/2)\) inside the angular radius.
\end{proof}

Equation~\ref{eq:vmf-metric} is the global metric-DP guarantee under chordal distance~\cite{chatzikokolakis2013broadening}; Equation~\ref{eq:vmf-calibration} is its angular-radius corollary.

\begin{theorem}[Computational directional privacy of the Owner view]
\label{thm:owner-view}
Suppose a conforming User generates the coarse and encrypted queries, BFV is indistinguishable under chosen-plaintext attack (IND-CPA), OT is receiver-private against its sender, and adjacent executions have identical auxiliary fields in Equation~\ref{eq:owner-leakage}. For every probabilistic polynomial-time (PPT) distinguisher \(A\) and \(\eta=\kappa d_c(u,u')\) such that \(e^\eta\) is polynomially bounded in the security parameter,
\begin{equation}
\Pr[A(\mathsf{View}_O(x))=1]\le e^\eta\Pr[A(\mathsf{View}_O(x'))=1]+\operatorname{negl}(\lambda).
\label{eq:owner-view}
\end{equation}
\end{theorem}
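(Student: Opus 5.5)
The plan is a hybrid argument that swaps the cryptographic parts of the Owner's view for a simulator depending only on the DP output, then applies Theorem~\ref{thm:vmf-pure} to that output. Write the Owner's real view as
\[
\mathsf{View}_O(x)=\bigl(\mathcal{L}_O(x),\,\mathsf{Enc}(\bar{\mathbf q}(x)),\,T_{\mathrm{OT}}(c(x))\bigr),
\]
where $T_{\mathrm{OT}}(c)$ is the sender-side OT transcript for choice vector $c=(c_1,\ldots,c_k)$, together with the public keys and evaluation keys. By assumption, all auxiliary fields of Equation~\ref{eq:owner-leakage} coincide for $x$ and $x'$. The only non-cryptographic, input-dependent part is therefore $(\tilde b_q,C_K,\mathsf{accept/reject})$.

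This triple is a deterministic function of $\tilde b_q$: Hamming search over the fixed index, with fixed tie-breaking and quota state, is Owner-local post-processing of $\tilde b_q$. It is also computed with randomness independent of the User's BFV and OT coins.

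\textbf{Hybrids.}
\begin{itemize}
\item $H_0(x)$ is the real view.
\item $H_1(x)$ replaces $\mathsf{Enc}(\bar{\mathbf q}(x))$ by $\mathsf{Enc}(\mathbf 0)$ under the same public material. $|\Pr[A(H_0)=1]-\Pr[A(H_1)=1]|$ is negligible by IND-CPA. The reduction embeds the challenge ciphertext, samples the vMF code and $C_K$ itself, and plays the OT receiver honestly. This works because the choices $c$ depend on the decrypted scores, and the reduction can compute those in the clear from $x$ and the corpus, which it may know since $\mathcal{A}$ is non-uniform over fixed inputs. The conforming-User assumption ensures the ciphertext really is a fresh encryption of a bounded packed vector.
\item $H_2(x)$ replaces $T_{\mathrm{OT}}(c(x))$ by $T_{\mathrm{OT}}(\mathbf 1)$. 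This is indistinguishable by receiver privacy of OOS OT against the sender, applied once to the whole batch of $k$ transfers (or by $k$ sub-hybrids). The masked-key table and completion frame are sender-computable or fixed-length given the auxiliary fields.
\end{itemize}
One subtlety is that the Owner's own $C_K$ and quota state enter the OT, but they are part of the sender's input, which the reduction simulates.

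Now $H_2(x)=F(\tilde b_q(x);R)$ for a fixed randomized post-processing map $F$ with independent coins $R$: keygen, encryption of zero, and the dummy-choice OT. Theorem~\ref{thm:vmf-pure} and post-processing therefore give
\[
\Pr[A(H_2(x))=1]\le e^{\eta}\Pr[A(H_2(x'))=1].
\]
Chaining the hybrids,
\[
\Pr[A(H_0(x))=1]\le e^{\eta}\bigl(\Pr[A(H_0(x'))=1]+\nu\bigr)+\nu,
\]
with $\nu$ negligible. Because $e^\eta$ is polynomially bounded, $e^\eta\nu+\nu$ is negligible, which yields Equation~\ref{eq:owner-view}.

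\textbf{Main obstacle.} The delicate step is the $H_0\to H_1$ reduction, specifically the dependence between the encrypted query and the later OT choices, since both derive from $x$. I would handle it by ordering the hybrids so that the BFV ciphertext is replaced first, with the reduction computing $c(x)$ from plaintext knowledge of $x$. Only afterwards is OT receiver privacy invoked, so the reduction never needs to decrypt the challenge ciphertext.

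A secondary point is the accept/reject bit and the quota state. I would argue these are functions of $C_K$, the identity, and the public counters alone. That holds as long as no decryption failure depends on $\bar{\mathbf q}$, which the bounded-input and exact-correctness assumptions of \S\ref{sec:bfv-correctness} ensure. With that, they fall under post-processing.
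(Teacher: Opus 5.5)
Your proof is correct and follows the paper's argument: you simulate the choice-dependent OT messages via receiver privacy, replace the query ciphertext by $\mathsf{Enc}(\mathbf 0)$ via IND-CPA, treat $(C_K,\mathsf{accept/reject})$ as post-processing of $M_\kappa(u)$, apply Theorem~\ref{thm:vmf-pure}, and absorb the hybrid losses using the polynomial bound on $e^\eta$. The only difference is the hybrid order: the paper simulates the OT transcript first, so the later IND-CPA reduction never needs the choices $c$, which removes your ``main obstacle'' without your ordering's extra step of recomputing $c$ in the clear and relying on BFV decryption correctness to match the real choices.
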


\begin{proof}[Proof sketch]
Receiver privacy simulates the choice-dependent OT messages, and BFV IND-CPA replaces the clean-query ciphertext by an encryption of zero. The residual view is \((M_\kappa(u),C_K,\mathsf{accept/reject})\), where the last two components are post-processing. Theorem~\ref{thm:vmf-pure} supplies the factor \(e^\eta\); polynomially bounded \(e^\eta\) absorbs the hybrid losses into \(\operatorname{negl}(\lambda)\).
\end{proof}

For \(T\) adaptive releases, sequential composition replaces \(\eta\) in Equation~\ref{eq:owner-view} by \(\eta_T=\kappa\sum_{t=1}^{T}d_c(u_t,u'_t)\), conditioned on the fixed auxiliary leakage.

\subsection{Exact Scoring and HE Privacy Scope}
\label{sec:bfv-correctness}

Let the canonically packed query \(\bar{\mathbf q}\) and every candidate \(\bar{\mathbf z}_i\) lie in \([-B,B]^d\).

\begin{lemma}[Exact integer scoring]
\label{thm:bfv-exact}
If BFV decryption succeeds for the configured circuit and
\begin{equation}
t>2dB^2,
\label{eq:bfv-bound}
\end{equation}
every decoded anchor equals \(s_i=\sum_{j=1}^{d}\bar z_{ij}\bar q_j\in\mathbb{Z}\).
\end{lemma}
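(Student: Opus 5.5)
The plan is to treat decryption as exact in the plaintext ring $\mathbb{Z}_t[X]/(X^n+1)$ and then lift the decoded value back to $\mathbb{Z}$. We use the fact that BFV is exact over $\mathbb{Z}_t$ whenever decryption succeeds. We then check that the true integer score lies strictly inside the centered interval of representatives.

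\textbf{Step 1: correctness modulo $t$.} By hypothesis, decryption succeeds for the configured circuit. That circuit consists of plaintext--ciphertext multiplication by the packed candidate matrix, rotation-based sum reduction, and, in the compact layout, masking and PMultE/Rand randomized evaluation. For a correct BFV evaluation, the decrypted plaintext equals the same homomorphic circuit applied to the encoded plaintexts over $\mathbb{Z}_t$.
\begin{itemize}
\item Under SIMD batching, slotwise plaintext--ciphertext multiplication gives $\bar z_{ij}\bar q_j \bmod t$ in the corresponding slots.
\item Each rotation permutes slots, so the rotate-and-add tree accumulates the $d$ products of one candidate into its anchor slot.
\item Masking multiplies by a $0/1$ plaintext, which fixes the anchor slots and clears the others.
\item The randomized evaluation adds only encryptions of zero in the score slots, or multiplies by a plaintext that is $1$ at anchors.
\end{itemize}
Hence every decoded anchor equals $s_i \bmod t$. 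The layout bookkeeping needs one check: each candidate's $d$ coordinates must occupy a contiguous power-of-two segment, with zero padding, that the rotation tree sums exactly once. This is a routine induction on the reduction depth.

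\textbf{Step 2: range bound and lifting.} Since $|\bar z_{ij}|,|\bar q_j|\le B$, the triangle inequality gives $|s_i|\le\sum_j|\bar z_{ij}||\bar q_j|\le dB^2$. So $s_i$ lies in an interval of $2dB^2+1$ consecutive integers. The hypothesis $t>2dB^2$ means $t\ge 2dB^2+1$, so reduction mod $t$ is injective on $[-dB^2,dB^2]$. The decoder uses centered representatives in $(-t/2,t/2]$, and $dB^2<t/2$, so the centered lift of $s_i \bmod t$ is exactly $s_i\in\mathbb{Z}$.

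\textbf{Main obstacle.} I expect the hardest part to be Step 1 for the compact randomized path. It is not obvious that PMultE/Rand leaves the anchor plaintext values unchanged rather than merely preserving decryptability. My approach is to invoke the correctness property of the ciphertext-simulatable evaluation from \cite{hwang2025ciphertext}: the output decrypts to $f(m)$ for the prescribed function $f$. I then take $f$ to be the masked score map, so Step 2 applies verbatim. Any noise-growth condition is already absorbed by the lemma's hypothesis that decryption succeeds.
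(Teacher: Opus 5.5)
Your proposal is correct and follows the paper's route: the paper's proof is exactly your Step~2 (\(|s_i|\le dB^2<t/2\), so centered reduction modulo \(t\) is injective on every possible score), while your Step~1 corresponds to the packed-layout invariants of Appendix~\ref{app:packed-invariants}, with the noise condition absorbed into the decryption-success hypothesis. Two small corrections to Step~1: \(\mathsf{PMultE}\) is the randomized form of the candidate multiplication itself, not a multiplication by a plaintext equal to \(1\) at anchors (its lift satisfies \(\widehat\mu\equiv\mu \pmod t\), so its plaintext effect is ordinary multiplication by the candidate encoding); and the compact layout's final step, which rotates group \(g\) to residue \(g \bmod 1024\) and sums \(1024\) groups, also needs the fact that masked groups never collide at an anchor, which the appendix states explicitly.
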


\begin{proof}
Since \(|s_i|\le dB^2<t/2\), centered reduction modulo \(t\) is injective on every possible score. Appendix~\ref{app:packed-invariants} proves that both layouts place this residue at each decoded anchor.
\end{proof}

The deployed \(B=127\), \(d=768\), and \(t>24{,}774{,}144\) satisfy Equation~\ref{eq:bfv-bound}; BFV noise correctness remains an independent decryption condition.

BFV IND-CPA hides \(\bar{\mathbf q}\) from the Owner. To hide the Owner's plaintext operands from the decrypting User, the compact path instantiates ciphertext-simulatable BFV~\cite{hwang2025ciphertext}. Write \(R_t=\mathbb Z_t[X]/(X^n+1)\), identify ring elements with coefficient vectors, and let \(D_{\Lambda,\sigma}\) denote the discrete Gaussian on lattice coset \(\Lambda\). For \(\mu\in R_t\), it samples \(\widehat\mu\leftarrow D_{\mu+t\mathbb Z^n,\sigma}\) and evaluates
\begin{equation}
\begin{aligned}
\mathsf{PMultE}(ct,\mu)&=ct\cdot\widehat\mu+(e,0),\\
\mathsf{Rand}(pk)&=e_2pk+(e_0,e_1),
\end{aligned}
\label{eq:randomized-bfv}
\end{equation}
where \(e\leftarrow\lfloor D_{\mathbb R^n,\tau}\rceil\), \(e_1,e_2\leftarrow D_{\mathbb Z^n,\sigma_r}\), and \(e_0\leftarrow\lfloor D_{\mathbb R^n,\tau_r}\rceil\). One \(\mathsf{Rand}(pk)\) precedes the public linear rotation--mask--addition subcircuit; the compact mask leaves \(\mathbf{s}\) in its canonical slots and zero elsewhere.

\begin{theorem}[HE server-input privacy for conforming Users]
\label{thm:he-user-view}
Let \(ct_q\) be a fresh symmetric BFV encryption of a bounded, canonically packed query, and let the randomized-evaluation parameters satisfy the correctness and smoothing conditions of ciphertext-simulatable BFV~\cite{hwang2025ciphertext}. Under the corresponding ring-learning-with-errors (RLWE) assumption, the compact scoring view of a conforming secret-key User is computationally simulatable as
\begin{equation}
\mathsf{View}^{\mathrm{HE}}_U \approx_c \mathsf{Sim}(pk,sk,ct_q,\mathbf{s},\mathsf{metadata}),
\label{eq:he-user-simulation}
\end{equation}
where \(\mathbf{s}\) contains the prescribed \(K\) exact scores. Consequently, the evaluated ciphertext coefficients and all non-score slots reveal no information about the candidate embeddings beyond \(\mathbf{s}\) and public metadata.
\end{theorem}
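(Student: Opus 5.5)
The proof is a hybrid argument that reduces to the ciphertext-simulatability theorem of~\cite{hwang2025ciphertext}, applied layer by layer to the compact scoring circuit. The evaluated ciphertext returned to the User is computed in three steps. First, one $\mathsf{PMultE}(ct_q,\mu_j)$ is applied per plaintext operand block of $\bar{\mathbf Z}_K$. Next, one $\mathsf{Rand}(pk)$ re-randomization is added. Finally, the public rotation--mask--addition subcircuit is applied. I would define $\mathsf{Sim}(pk,sk,ct_q,\mathbf s,\mathsf{metadata})$ directly. It computes the canonical compact plaintext $m_{\mathbf s}\in R_t$, which holds $\mathbf s$ in its score slots and zero elsewhere, by deterministic encoding from $\mathbf s$ and the layout in $\mathsf{metadata}$. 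It then outputs a fresh encryption $\mathsf{Enc}_{pk}(m_{\mathbf s})$ with noise drawn from the output distribution specified by the randomized-evaluation parameters. It also reproduces the remaining public transcript elements, namely lengths and Galois-key-dependent framing, from $\mathsf{metadata}$.

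\textbf{Step 1: the randomized product depends only on the plaintext product.} Write $ct_q=(c_0,c_1)$ with $c_0+c_1 s=\Delta\bar{\mathbf q}+e_q$, fresh and bounded. By Equation~\ref{eq:randomized-bfv}, $\mathsf{PMultE}(ct_q,\mu)$ multiplies by the coset sample $\widehat\mu\leftarrow D_{\mu+t\mathbb Z^n,\sigma}$ and then adds a smoothing error $e$. I would invoke the smoothing-lemma statement of~\cite{hwang2025ciphertext}. Under the stated parameter conditions and conditioned on $(ct_q,sk)$, the pair (product ciphertext, decryption noise) is statistically close to a distribution determined only by the decrypted plaintext $\bar{\mathbf q}\cdot\mu \bmod t$ and public bounds. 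Summing over blocks preserves this, because each summand's dependence on $\mu_j$ passes only through $\bar{\mathbf q}\mu_j$.

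\textbf{Step 2: circuit privacy through $\mathsf{Rand}$ and the linear layer.} Adding $\mathsf{Rand}(pk)=e_2pk+(e_0,e_1)$ makes the ciphertext, under RLWE, computationally indistinguishable from a fresh public-key encryption of the same plaintext with noise drawn from a fixed distribution that dominates the evaluation noise. The following rotations, plaintext mask, and additions are public and deterministic given the layout. Applying them to a fresh-looking encryption of $\bar{\mathbf q}\cdot\bar{\mathbf Z}_K$ yields an encryption of exactly $m_{\mathbf s}$: the mask zeroes non-score slots, and Lemma~\ref{thm:bfv-exact} with Appendix~\ref{app:packed-invariants} places $s_i$ at each anchor. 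Noise growth from public operations is a fixed function of the fresh noise, so the simulator can apply the same public circuit to its own fresh encryption. Equivalently, it samples directly from the resulting distribution. Chaining the statistical hybrid from Step 1, the RLWE hybrid from Step 2, and an exact-equality hybrid for the public layer gives Equation~\ref{eq:he-user-simulation}. The concluding sentence of the theorem then follows because $\mathsf{Sim}$ receives no candidate embedding beyond $\mathbf s$.

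\textbf{Main obstacle.} I expect the hard part to be the order of operations. In~\cite{hwang2025ciphertext}, $\mathsf{Rand}$ is applied after evaluation, whereas here it precedes the rotation--mask subcircuit. Key-switching in rotations uses the Galois keys, and its noise is not a function of the plaintext alone. It is a function of the ciphertext components, which could a priori still encode coefficient information of $\bar{\mathbf q}\cdot\bar{\mathbf Z}_K$ in non-score slots before masking. I would first argue that after $\mathsf{Rand}$ the ciphertext is computationally a fresh encryption of the unmasked product plaintext. The remaining public circuit is then a public function of a fresh encryption, and its output distribution depends only on its decrypted plaintext $m_{\mathbf s}$ provided the post-$\mathsf{Rand}$ noise is distributed independently of the operands. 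If independence fails, I would fall back on the flooding requirement, assumed among the ``correctness and smoothing conditions'': $\tau_r$ must be chosen large enough to statistically drown the key-switching noise. The paper assumes this requirement but does not state it explicitly.
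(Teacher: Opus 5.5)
\textbf{There is a genuine gap: your route uses the paper's ingredients in a different order, and the step that carries the weight is false for BFV.} The ingredients are the paper's: PMultE error simulatability and $\mathsf{Rand}$ masking from~\cite{hwang2025ciphertext}, the public subcircuit treated as post-processing, and the fact that the final plaintext is $\mathsf{Encode}(\mathbf s,0,\ldots,0)$. The paper's sketch reduces to the scores \emph{before} any post-processing: it replaces each group ciphertext by one generated from $ct_q$ and that group's scores, so the public subcircuit only acts on simulated inputs. You obtain only a fresh-looking encryption of the unmasked componentwise product, which $\mathsf{Sim}$ cannot produce from $\mathbf s$. You then need the rotation--mask--addition circuit to forget everything except $m_{\mathbf s}$, and you justify this by saying noise growth from public operations depends only on the fresh noise, or that the output depends only on $m_{\mathbf s}$ once the post-$\mathsf{Rand}$ noise is operand-independent. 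That justification fails for BFV, for the following reasons.
\begin{enumerate}
\item \emph{The error encodes the plaintext.} For every valid ciphertext with phase $\varphi$ and plaintext $p$, the error $E=\varphi-(Q/t)p$ satisfies $tE\equiv-(Q\bmod t)\,p\pmod t$ coefficientwise, and $\gcd(Q,t)=1$. So the fractional part of $E$ encodes $p$, whatever integral noise $\mathsf{Rand}$ or key switching added earlier.
\item \emph{Masking is invertible on the error.} Multiplying by the public mask lift $\tilde M$ maps the error to exactly $E\tilde M$; there is no wrap-around, by decryption correctness. Since $X^{8192}+1$ is irreducible over $\mathbb Q$, the nonzero $\tilde M$ is invertible in $\mathbb Q[X]/(X^n+1)$.
\item \emph{Concrete leak.} Suppose no noise is added after masking, for instance when $K\le 8$ and the single group is never rotated. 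The secret-key User then computes $E_{\mathrm{out}}\tilde M^{-1}$ and reads off the entire pre-mask plaintext. That plaintext contains the non-anchor partial sums of the products $\bar q_j\bar z_{ij}$, which reveal far more about the candidate embeddings than $\mathbf s$. With more groups this information is entangled with other terms, but your argument gives no proof that it is removed.
\item \emph{The fallback does not help.} Enlarging $\tau_r$ to drown key-switching noise targets the wrong quantity. The leak sits in the plaintext-determined lattice coset of the error, and noise injected before the mask cannot smooth that coset.
\end{enumerate}

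\textbf{What is missing.} You need exactly the per-group statement the paper's sketch takes from~\cite{hwang2025ciphertext}: a ciphertext whose plaintext is already that group's scores padded with zeros, generated from $ct_q$ and those scores. To establish it on your route, the randomization would have to act after the mask and compaction. One way is to add integral noise whose width exceeds the smoothing parameter of the image lattice of the public linear map; for the mask alone that lattice is $\tilde M\mathbb Z^n$. Only then are the remaining public steps genuine post-processing.

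You were right that the placement of $\mathsf{Rand}$ is the crux, and this also affects the paper's sketch. Under its stated ordering, one $\mathsf{Rand}$ precedes the rotation--mask--addition subcircuit. The ciphertext entering that subcircuit therefore encrypts componentwise products, not group scores. So the paper's one-line citation does not by itself discharge the step you flagged.
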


\begin{proof}[Proof sketch]
PMultE error simulatability and Rand masking replace each group ciphertext by one generated from \(ct_q\) and its group scores~\cite{hwang2025ciphertext}. Applying the public linear subcircuit preserves indistinguishability, and a hybrid over groups yields Equation~\ref{eq:he-user-simulation} because the final plaintext is \(\mathsf{Encode}(\mathbf{s},0,\ldots,0)\).
\end{proof}

Appendix~\ref{app:protocol-implementation} specifies the samplers and BFV parameters. The theorem assumes a fresh symmetric, canonical query ciphertext; malformed ciphertexts require a well-formedness proof. The scores \(\mathbf{s}\) remain explicit leakage.

\subsection{Exact-Score Exposure and Payload Access}
\label{sec:user-access}

The score oracle and quota ledger satisfy
\begin{equation}
\begin{aligned}
\log_2|\operatorname{supp}(\mathbf{s})|&\le K\log_2(2dB^2+1),\\
r_i\le R&\Longrightarrow\operatorname{rank}(Q_i)\le\min\{R,d\},
\end{aligned}
\label{eq:score-accounting}
\end{equation}
where \(\mathbf{s}_i=Q_i\bar{\mathbf z}_i\) contains the \(r_i\) scores released for document \(i\). The first bound grows linearly with \(K\); the second counts linear observations but does not bound inference from representation priors. Authentication is required because Sybil identities reset \(R\).

\begin{theorem}[Per-round payload-key bound]
\label{thm:payload-access}
Assume the OOS extension realizes active-secure 1-out-of-\(K\) OT for each of \(k\) rows~\cite{orru2016actively}, the OT-key mask is pseudorandom, content keys are independent, and the payload cipher is authenticated encryption. Except with negligible probability, a malicious receiver completing one accepted round recovers at most \(k\) distinct candidate content keys and therefore at most \(k\) distinct candidate payloads, independently of \(K\).
\end{theorem}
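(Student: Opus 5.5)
The argument is a simulation-based hybrid in the OT-hybrid model. We replace the real OOS OT-extension transcript by the ideal functionality for $k$ independent 1-out-of-$K$ transfers. Active security of the OOS extension~\cite{orru2016actively} supplies a simulator that extracts, for each row $j\in[k]$, a single effective choice $c_j\in[K]$ from the malicious receiver. The simulator then hands the receiver only the OT key $r_{j,c_j}$ for each row. Extraction failures and real-versus-ideal distinguishing advantage are negligible, so it suffices to bound the receiver's knowledge in this ideal world. The positions $c_1,\dots,c_k$ may collide; the set $S=\{c_j\}$ has $|S|\le k$ in either case.

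\textbf{Step 1: hide the unselected option keys.} For each row $j$, the sender's option keys $\{r_{j,\ell}\}_{\ell\ne c_j}$ are never given to the receiver. By the pseudorandomness of the OT-key mask (the derivation of per-row option keys and pads), a hybrid over the $k(K-1)$ unchosen entries replaces each pad $\mathsf{PRF}(r_{j,\ell})$ by uniform randomness. After this hybrid, every masked-table entry $\kappa_{i_\ell}\oplus\text{pad}_{j,\ell}$ with $\ell\ne c_j$ is a one-time-pad encryption of $\kappa_{i_\ell}$. Column $\ell\notin S$ is unopened in every row, so its $k$ entries are independent uniform strings that carry no information about $\kappa_{i_\ell}$. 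Because content keys are sampled independently, the receiver's view of $\{\kappa_{i_\ell}\}_{\ell\notin S}$ is exactly its prior. Each such key is therefore uniform over $\{0,1\}^{128}$ given the view, and is guessed with probability at most $2^{-128}$.

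\textbf{Step 2: pass from keys to payloads.} The remaining view includes the streamed AEAD ciphertexts $P_{i_\ell}$, the scores, and the public metadata; by Theorem~\ref{thm:he-user-view} and the session binding, these do not depend on the unselected keys beyond $P_{i_\ell}$ itself. Encryption under HKDF-derived keys from uniform, unknown $\kappa_{i_\ell}$ gives IND-CPA confidentiality of the payload by a further hybrid over the at most $K-k$ unselected columns. Authenticity (INT-CTXT) additionally rules out related-ciphertext tricks that could yield a valid decryption. Any receiver that outputs a correct payload for some $\ell\notin S$ with non-negligible probability would break these hybrids.

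The main obstacle is making the extraction step precise when the real receiver can deviate adaptively before the masked table arrives. The committed choices must be fixed before the sender releases the $k\times K$ table. The plan is to place the table transmission strictly after the OT ideal call, and to argue that the OOS consistency check binds each row to a single $c_j$ except with negligible probability; this is the ``accepted round'' hypothesis. Multiple rows cannot jointly expose another column, since each row reveals only its own chosen pad. The bound $|S|\le k$ therefore holds independently of $K$, and the total loss is $k\cdot\mathrm{negl}+k(K-1)\,\mathrm{Adv}_{\mathsf{PRF}}+(K-k)\,\mathrm{Adv}_{\mathsf{AEAD}}$, which is negligible for polynomial $K$.
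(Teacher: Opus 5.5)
Your proposal is correct and follows the paper's own argument. Active OOS receiver security fixes one option key per row, and the mask on every unchosen option key is replaced by randomness. You phrase this as a PRF hybrid, which matches the theorem's stated hypothesis; the appendix instead uses a random-oracle hybrid on the SHA-256 mask. AEAD confidentiality then hides the unselected payloads, and summing over the $k$ rows, with duplicates allowed, gives $|S|\le k$.

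Two small touch-ups. First, you do not need Theorem~\ref{thm:he-user-view}, which covers only conforming Users and so does not directly apply to a malicious receiver; independence of the content keys already makes the scoring transcript key-independent. Second, the payload hybrid ranges over $K-|S|\le K-1$ unselected columns, not ``at most $K-k$''.
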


\begin{proof}[Proof sketch]
Active receiver security reveals at most one option key per row. Pseudorandom masking hides every unchosen content key, and authenticated-encryption confidentiality hides its payload; summing over \(k\) rows proves the bound.
\end{proof}

Across \(T\) accepted rounds, the payload bound composes to \(Tk\); corpus enumeration remains possible if the accepted selections eventually cover it. Scores, padded lengths, stable-ciphertext linkage, and cross-round inference remain explicit leakage. Thus increasing \(K\) enlarges Equation~\ref{eq:score-accounting} but not the per-round payload cap.

\section{Evaluation}
\label{sec:experiments}

Our evaluation asks four questions: whether the learned filter outperforms classical alternatives while preserving retrieval across models, domains, and corpus scales; whether the resulting evidence supports end-to-end RAG quality; how shortlist size and two-forward overlap determine protocol cost; and how privacy calibration and candidate budget jointly determine retrieval quality and representation exposure.

\begin{table*}[thbp]
\caption{Two-forward retrieval on five BEIR corpora spanning 25K--5.4M documents. Stage~1 uses the learned hash model; Stage~2 reranks its candidates with the original pretrained encoder. Pretr.\ Fl. is matched full-corpus retrieval, and each $\Delta@K$ is Stage~2@$K$ minus Pretr.\ Fl.}
\label{tab:main}
\centering
\small
\begin{tabular}{@{}llrccccccc@{}}
\toprule
 & & & \multicolumn{2}{c}{\textbf{Stage 1: Bin.\ Recall@$K$}} & \multicolumn{5}{c}{\textbf{Stage 2: NDCG@10}} \\
\cmidrule(lr){4-5}\cmidrule(lr){6-10}
\textbf{Model} & \textbf{Dataset} & \textbf{Docs} & \textbf{$K$=200} & \textbf{$K$=500} & \textbf{S2@200} & \textbf{S2@500} & \textbf{Pretr.\ Fl.} & $\boldsymbol{\Delta@200}$ & $\boldsymbol{\Delta@500}$ \\
\midrule
E5-base-v2 & SciDocs & 25K & .4331 & .5469 & .1875 & .1874 & .1870 & $+.0005$ & $+.0004$ \\
(256-bit) & NQ & 2.7M & .8884 & .9271 & .5723 & .5787 & .5854 & $-.0132$ & $-.0068$ \\
 & DBpedia-Entity & 4.6M & .4549 & .5474 & .4144 & .4224 & .4271 & $-.0127$ & $-.0047$ \\
 & Climate-FEVER & 5.4M & .5300 & .6154 & .2818 & .2785 & .2627 & $+.0192$ & $+.0158$ \\
 & FEVER & 5.4M & .9368 & .9484 & .8417 & .8451 & .8501 & $-.0084$ & $-.0050$ \\
\midrule
BGE-base & SciDocs & 25K & .5211 & .6467 & .2224 & .2225 & .2228 & $-.0004$ & $-.0003$ \\
(512-bit) & NQ & 2.7M & .8923 & .9349 & .5326 & .5372 & .5414 & $-.0088$ & $-.0042$ \\
 & DBpedia-Entity & 4.6M & .4764 & .5679 & .4018 & .4041 & .4081 & $-.0063$ & $-.0040$ \\
 & Climate-FEVER & 5.4M & .5935 & .6774 & .2874 & .2848 & .2836 & $+.0038$ & $+.0012$ \\
 & FEVER & 5.4M & .9429 & .9517 & .8480 & .8483 & .8495 & $-.0015$ & $-.0012$ \\
\bottomrule
\end{tabular}
\end{table*}

\subsection{Experimental Setup}

\textbf{Encoder, Datasets, and Metrics.} We train E5-base-v2~\cite{wang2022e5} and BGE-base-en-v1.5~\cite{chen2024bge} hash models on MS MARCO passage ranking~\cite{nguyen2016msmarco} and evaluate zero-shot transfer on five BEIR corpora~\cite{thakur2021beir}: SciDocs~\cite{cohan2020scidocs}, Natural Questions (NQ)~\cite{kwiatkowski2019natural}, DBpedia-Entity~\cite{hasibi2017dbpedia}, Climate-FEVER~\cite{diggelmann2020climate}, and FEVER~\cite{thorne2018fever}, following the standard BEIR zero-shot protocol. The corpora span from 25,657 to 5.4M documents. E5-base-v2 uses 256-bit codes, and BGE-base-en-v1.5 uses 512-bit codes. Stage~1 reports Recall@$K$ over the relevance judgments for the learned Hamming filter; Stage~2 reranks exactly those candidates with the unchanged pretrained encoder and reports NDCG@10. Full-corpus retrieval with the same pretrained encoder is presented as a reference. Appendix~\ref{app:setup} gives the detailed training recipe and hyperparameters.

\noindent\textbf{Representation Exposure.} Search-based inversion evaluates the embedding-guided search stages of ZSInvert~\cite{zhang2025universal} on 100 randomly sampled MS MARCO documents. Llama-3.1-8B-Instruct~\cite{llama2024herd} generates a width-50 beam for six search rounds, scored by float cosine or normalized hash similarity; we report mean verifier cosine and attack success at cosine 0.8. Generation-based inversion trains a GEIA~\cite{li2023geia} DialoGPT-medium~\cite{zhang2020dialogpt} decoder on PersonaChat~\cite{zhang2018personachat} for 10 epochs and reports token F1 and verifier cosine on held-out passages. Following the property-inference threat model of Song and Raghunathan~\cite{song2020information}, we evaluate AG News topic~\cite{zhang2015character}, IMDB sentiment~\cite{maas2011learning}, and 50-way 20 Newsgroups authorship~\cite{lang1995newsweeder} labels. Five stratified 60/20/20 splits separate attacker training, model selection, and victim testing; validation macro F1 selects the best model among logistic regression, a two-layer MLP, and LightGBM~\cite{ke2017lightgbm}, and we report victim-test macro F1 averaged across the five splits. Appendix~\ref{app:attacks} gives the complete attack flow, decoding limits, optimization hyperparameters, and preprocessing.

Following metric- and directional-DP evaluation conventions~\cite{chatzikokolakis2013broadening,weggenmann2021directional}, we report each operating point by its protected space, radius, and $(\varepsilon,\delta)$ parameters. Definition~\ref{def:directional-privacy} defines the generic radius $\rho$; here, we denote $\rho_h$ to be the Euclidean radius on the bounded pre-sign vector, and $\rho_\theta$ to be the angular radius on its normalized direction. The RDP-vMF comparison maps $\rho_h=2$ to $\rho_\theta=2\arcsin(1/\sqrt{L})$. The retrieval sweeps and property-inference table use $\varepsilon\in\{8,16,32,64\}$ at $\rho_h=2$; the inversion sweeps additionally include tighter budgets and the $\rho_h=6.32$ setting. Following standard approximate-DP calibration~\cite{dwork2014foundations}, Gaussian and RDP-vMF set $\delta=10^{-6}$, below the inverse of every evaluated query-set size, while pure-vMF provides $\delta=0$. The randomized attack plots use the same calibrated mechanisms as the retrieval comparison, and Theorem~\ref{thm:vmf-pure} gives the angular calibration for the protocol's pure-vMF release.

\subsection{Retrieval Quality}

\subsubsection{Main Retrieval Results}
\label{sec:search_quality}

\noindent\textbf{A concise shortlist preserves quality and may prune distractors.} Table~\ref{tab:main} evaluates the deployed two-forward path at $K\in\{200,500\}$. Recall@$K$ measures how much judged-relevant material survives the learned filter; Stage~2 NDCG@10 measures the ranking obtained when the original pretrained model scores only those candidates. The full-corpus column uses the same scorer, and $\Delta@200$ and $\Delta@500$ isolate the candidate filtering capability of the hash model.

At $K=500$, both encoders retain 98.84--100.21\% of full-corpus NDCG@10 on SciDocs, NQ, DBpedia-Entity, and FEVER. On Climate-FEVER, the shortlist even improves NDCG@10 by 0.0158 for E5 and 0.0012 for BGE. We hypothesize that the learned filter can act as a coarse semantic denoiser that removes spurious high-scoring distractors that Stage~2 would otherwise place ahead of relevant ones.

We observe only a slight increase in NDCG@10 (0.0064 and 0.0080, respectively) as $K$ increases from 200 to 500, while several easier pairs are already saturated at $K=200$. We therefore report both budgets here and benchmark latency through $K=3000$, leaving larger candidate pools available for the differential-privacy operating points evaluated next.

\noindent\textbf{The learned filter outperforms classical and supervised hashing baselines.} Table~\ref{tab:e5-hash-baselines} organizes candidate filters by the information used to construct their codes. Direct sign is a parameter-free, one-bit quantization of each pretrained coordinate. Data-independent LSH comprises random-hyperplane LSH~\cite{charikar2002similarity} and Super-Bit LSH~\cite{ji2012superbit}, which orthogonalizes the random projections. Unsupervised data-dependent hashing includes PCA-sign and the learned rotations of ITQ~\cite{gong2011itq} and IsoHash~\cite{kong2012isotropic}; we fit each transformation on MS MARCO passage embeddings and transfer it across corpora. BPR~\cite{yamada2021bpr} represents a recent supervised learning-to-hash method through a pairwise ranking objective, while our filter jointly adapts the encoder and hash head with retrieval and ranking supervision.

\begin{table}[t]
\centering
\caption{E5 candidate-filter baselines averaged over the five corpora in Table~\ref{tab:main}. Every method uses exact Hamming Stage~1 and the same pretrained E5 Stage~2 scorer. Direct sign uses 768 embedding coordinates; remaining methods use 256 bits.}
\label{tab:e5-hash-baselines}
\resizebox{\columnwidth}{!}{%
\begin{tabular}{@{}lrrrr@{}}
\toprule
& \multicolumn{2}{c}{Recall@$K$} & \multicolumn{2}{c}{NDCG@10} \\
\cmidrule(lr){2-3}\cmidrule(lr){4-5}
Method & $K=200$ & $K=500$ & $K=200$ & $K=500$ \\
\midrule
Direct sign (768b) & .5046 & .5669 & .4183 & .4338 \\
Random-hyperplane LSH~\cite{charikar2002similarity} & .3068 & .3743 & .2893 & .3283 \\
Super-Bit LSH~\cite{ji2012superbit} & .3257 & .3916 & .3008 & .3366 \\
PCA-sign & .5802 & .6398 & .4469 & .4534 \\
ITQ~\cite{gong2011itq} & .6080 & .6789 & .4494 & .4562 \\
IsoHash~\cite{kong2012isotropic} & .6098 & .6786 & .4520 & .4573 \\
BPR$^{\dagger}$~\cite{yamada2021bpr} & .6125 & .6813 & .4473 & .4540 \\
\textbf{Learned filter (ours)} & \textbf{.6486} & \textbf{.7170} & \textbf{.4595} & \textbf{.4624} \\
\bottomrule
\end{tabular}%
}
\begin{minipage}{\columnwidth}
\footnotesize $^{\dagger}$For a fair comparison, we manually reimplement BPR using the same E5 representation, 256-bit budget, MS MARCO training data, symmetric Hamming candidate search, and evaluation pipeline as our method.
\end{minipage}
\end{table}

Table~\ref{tab:e5-hash-baselines} isolates candidate-filter quality by holding exact Hamming search and the pretrained Stage~2 scorer fixed. Our filter improves mean Recall@500 by 0.0357 and mean Stage~2 NDCG@10 by 0.0084 over BPR; it also improves mean Stage~2 NDCG@10 by 0.0051 over the strongest unsupervised baseline. The full-precision scorer can repair ordering only among documents retained by Stage~1, making candidate recall the more direct measure of hash quality.

\subsubsection{Retrieval under Differential Privacy}

Following the comparison methodology of Biswas et al.~\cite{biswas2025comparing}, we evaluate randomized response, analytic Gaussian, and RDP-vMF at common $(\varepsilon,\delta=10^{-6})$ targets and the protocol's formal pure-vMF mechanism under pure metric DP. Randomized response composes privacy across the full binary code; RDP-vMF calibrates its Rényi-divergence curve at the angular boundary defined above; Gaussian calibrates its analytic profile to Euclidean radius $\rho_h=2$ on the bounded pre-sign representation. Pure-vMF uses the exact directional calibration in Theorem~\ref{thm:vmf-pure}.

\DPMainUtilityTable

\noindent\textbf{Continuous randomization preserves retrieval quality.} Table~\ref{tab:dp-retrieval} reports randomized response at $K=3000$ and selects the continuous mechanisms' smallest evaluated $K$ that reaches approximately 99\% retention at $\varepsilon=16$ or 64, while retaining the $K=3000$ endpoints at tighter budgets. RDP-vMF reaches 99.2\% at $K=1000$, Gaussian reaches 99.2\% at $K=2000$, and formal pure-vMF reaches 99.4\% at $K=2000$. With 128-token Qwen3-32B generation~\cite{yang2025qwen3}, their absolute protection costs are 0.37, 0.73, and 0.73 seconds (\textbf{9.97\%} of the plaintext pipeline); the $K=3000$ endpoint adds at most 1.10 seconds. The per-dataset $K$ curves and complete E5 and BGE sweeps appear in Appendix~\ref{app:dp-results}.

\subsubsection{End-to-End RAG Quality}

\noindent\textbf{Protected retrieval preserves end-to-end RAG quality.} We evaluate 500 Natural Questions queries over the full 2.68M-passage BEIR corpus~\cite{thakur2021beir}. The E5 scorer supplies the top five passages to Qwen3-32B, which returns a short answer under greedy decoding; exact match and token F1 use the NQ-Open answer aliases. Table~\ref{tab:rag-direct} shows that every two-forward operating point remains within 0.2 EM and 0.20 F1 of full-corpus float retrieval. Pure-vMF at $(\varepsilon=64,K=3000)$ reaches 50.0 EM and 62.89 F1, compared with 50.0 and 63.09 for the float reference. Appendix~\ref{app:rag} evaluates client-side cross-encoder reranking of the authorized payloads.

\begin{table}[thbp]
\centering
\caption{End-to-end RAG quality on 500 NQ queries. Hit is answer-alias coverage in 5 passages supplied to Qwen3-32B.}
\label{tab:rag-direct}
\small
\begin{tabular}{@{}lcrrr@{}}
\toprule
Retrieval & $(\varepsilon,K)$ & Hit & EM & F1 \\
\midrule
Full-corpus float & $(\infty,N)$ & 88.4 & 50.0 & 63.09 \\
No-DP hash & $(\infty,500)$ & 87.8 & 50.2 & 63.12 \\
Gaussian & $(16,3000)$ & 88.2 & 50.2 & 63.06 \\
RDP-vMF & $(16,3000)$ & 87.8 & 50.2 & 63.00 \\
Pure-vMF & $(64,3000)$ & 87.8 & 50.0 & 62.89 \\
\bottomrule
\end{tabular}
\end{table}

\subsection{Efficiency Evaluation}
\label{sec:efficiency}

We evaluate four sources of systems cost: the two-forward pipeline, candidate-set scaling, complete RAG latency, and the online latency relative to prior private-retrieval systems. Appendices~\ref{app:protocol-implementation} and~\ref{app:latency-methodology} give the implementation and measurement details.

\noindent\textbf{Pipeline overlap absorbs almost all two-forward overhead.} We measure the complete online path on SciDocs ($N=25{,}657$, $d=768$, $k=10$, $K=500$) over a 10-Gbps link, spanning shared tokenization, dual BF16 encoder forward passes, randomized compact BFV scoring, active-secure OT, and final payload decryption. The two encoders reuse the same token tensors, and the first forward computes only the hash logits. Model-stage and cryptographic results are means over 50 queries.
Figure~\ref{fig:latency} expands Steps~1--4 of the online protocol (\S\ref{sec:online}). After Step~1 sends the coarse frame, Owner-side Hamming search (Step~2) and the User's pretrained forward plus BFV encryption (Step~3) run concurrently. For E5, Step~2 finishes at 10.89~ms and starts payload streaming before Step~3 finishes at 13.26~ms, so Step~4 waits for Step~3. For BGE, Step~3 finishes at 13.34~ms before Step~2 fixes $C_K$ at 14.23~ms, so Step~4 instead waits for Step~2. The complete paths take 198.91 and 199.88~ms, respectively.

\begin{figure}[t]
\centering
\includegraphics[width=\linewidth]{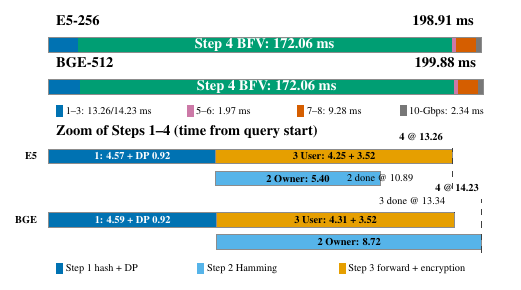}
\caption{Protected critical-path latency at $K=500$. After Step~1, Owner-side Step~2 and User-side Step~3 overlap, hiding 5.40 of 7.77~ms for E5 and all 7.83~ms for BGE before Step~4. Times are milliseconds; \S\ref{sec:online} defines the steps.}
\label{fig:latency}
\end{figure}

\noindent\textbf{Candidate budget directly controls compute and communication.} We sweep $K$ from 200 to 3000 on SciDocs, measuring the complete randomized-evaluation protocol and projecting its exact serialized traffic at 100~Mbps, 1~Gbps, and 10~Gbps. As Table~\ref{tab:e2e-k} shows, compute grows from 104.1~ms at $K=200$ to 1096.2~ms at $K=3000$, while traffic grows from 1.64 to 13.64~MB. At $K=500$, total latency is 198.9~ms at 10~Gbps and 430.9~ms at 100~Mbps; at $K=3000$, these values rise to 1.107 and 2.187~seconds. Compact BFV scoring returns one score ciphertext throughout this range, so the remaining growth comes from candidate scoring and the payload and masked-key traffic, which scale with $K$. The smallest candidate budget that meets the retrieval-quality target therefore gives the best operating point.

\begin{table}[t]
\centering
\caption{Protected retrieval before generation versus candidate budget on SciDocs. Compute includes query DP and the complete randomized-evaluation protocol except transfer; bandwidth columns add the exact serialized traffic.}
\label{tab:e2e-k}
\small
\setlength{\tabcolsep}{3.2pt}
\begin{tabular}{@{}rrrrrr@{}}
\toprule
& & & \multicolumn{3}{c}{Protected latency (ms)} \\
\cmidrule(l){4-6}
$K$ & Compute & Traffic (MB) & 100~Mbps & 1~Gbps & 10~Gbps \\
\midrule
200 & 104.1 & 1.64 & 235.6 & 117.3 & 105.4 \\
500 & 196.6 & 2.93 & 430.9 & 220.0 & 198.9 \\
1000 & 376.1 & 5.07 & 781.7 & 416.6 & 380.1 \\
2000 & 729.8 & 9.35 & 1478.2 & 804.7 & 737.3 \\
3000 & 1096.2 & 13.64 & 2187.3 & 1205.3 & 1107.1 \\
\bottomrule
\end{tabular}
\end{table}

\noindent\textbf{Protection adds little latency to a full RAG pipeline.} We place the full 2.68M-passage E5 index on one A100 and Qwen3-32B on a second A100, which generates 128 tokens in 7.296~seconds. The plaintext filter and full-corpus float baselines take 7.305 and 7.302~seconds end to end. Figure~\ref{fig:rag-latency} removes this shared generation time and reports the incremental protection cost: $K=500$ adds 0.190~seconds (2.6\%), the representative $K=1000$ and $K=2000$ operating points add 0.371 (5.1\%) and 0.728~seconds (10.0\%), and $K=3000$ adds 1.098~seconds (15.0\%). The compact view makes the candidate-budget scaling visible without redrawing the same generation bar for every operating point; Appendix~\ref{app:latency-methodology} reports the measurement scope and output-length sensitivity.

\begin{figure}[t]
\centering
\includegraphics[width=\linewidth]{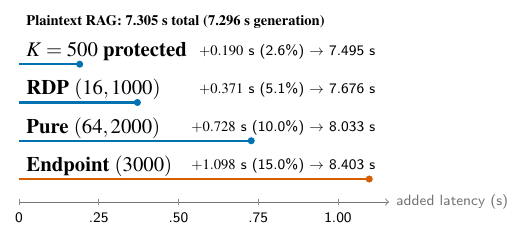}
\caption{Incremental latency over plaintext 128-token RAG. Qwen3-32B generation takes 7.296~seconds on one A100 (7.305~seconds total); labels report added latency, percentage overhead, and protected total.}
\label{fig:rag-latency}
\end{figure}

\noindent\textbf{At matched quality, our protocol is fastest at every evaluated corpus scale.} Figure~\ref{fig:protocol-comparison} compares online latency against the recent P$^2$RAG~\cite{ming2026p2rag}, RemoteRAG~\cite{chang2025remoterag}, and PANTHER~\cite{li2025panther} at 100~Mbps using the same 768-dimensional E5-base-v2 embeddings, top-10 output, and hardware setup. Our protocol uses pure-vMF at $\varepsilon=64$ and the smallest dataset-specific $K$ retaining at least 99\% of float NDCG@10: 292 for SciDocs, 104 for Touch\'e, and 1956 for NQ-1M. It takes 0.298, 0.160, and 1.456~seconds on the three corpora, respectively; Touch\'e is faster than SciDocs because it has a smaller candidate set. These latencies are 2.1$\times$, 69.2$\times$, and 20.1$\times$ faster than P$^2$RAG and 10.2$\times$, 33.1$\times$, and 5.0$\times$ faster than RemoteRAG. PANTHER takes 19.40 and 40.41~seconds on SciDocs and Touch\'e and exhausts 256~GB of memory during its NQ-1M PIR answer. The P$^2$RAG implementation uses 192 hardware threads and excludes trusted-dealer preprocessing, whereas our BFV scorer uses only 16 threads. Therefore, these choices favor its reported online latency. Appendix~\ref{app:protocol-comparison} provides the matched benchmark contract, baseline implementations, and corresponding 1-Gbps results.

\begin{figure}[t]
\centering
\includegraphics[width=\linewidth]{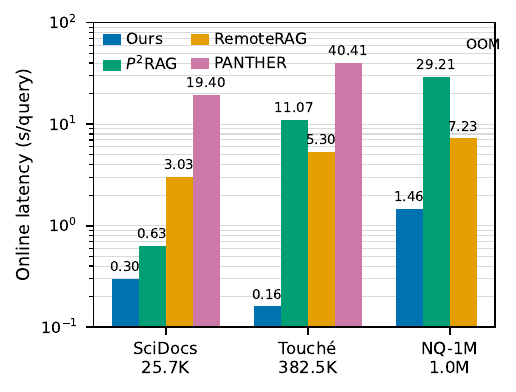}
\caption{Matched online protocol latency on three corpus scales using E5-base-v2, top-10 output, and a 100-Mbps link. Our protocol uses pure-vMF at $\varepsilon=64$ and the smallest dataset-specific $K$ that retains at least 99\% of float NDCG@10. PANTHER exhausts 256~GB of memory in NQ-1M.}
\label{fig:protocol-comparison}
\end{figure}

\subsection{Representation Exposure}
\label{sec:security_eval}

\noindent\textbf{Binarization and DP provide defense in depth against inversion.} Table~\ref{tab:inversion-nodp} separates the two layers: the learned hash creates a discrete bottleneck that reduces leakage even without DP, and metric-DP randomization adds a formally calibrated second layer. Table~\ref{tab:eia-case-study} illustrates the aggregate trend on a common target. The float reconstruction recovers the named organization, its nonprofit status, and its activity; the learned hash turns the foundation-like acronym into a banking organization; and the randomized code produces an unrelated geographic passage. Appendix~\ref{app:inversion-cases} reports the complete decoded strings for this target and three additional cases.

\begin{table}[H]
\centering
\caption{Embedding-inversion measurements for E5-base-v2; lower is better. Hash and pure-vMF releases use 256-bit codes, and pure-vMF provides metric DP. Search success uses a verifier-cosine threshold of 0.8.}
\label{tab:inversion-nodp}
\small
\begin{tabular}{@{}lccc@{}}
\toprule
Metric & Float & No-DP hash & Pure-vMF ($\varepsilon=64$) \\
\midrule
Search cosine & .824 & .638 & .423 \\
Search success & .920 & .590 & .325 \\
Gen. token F1 & .545 & .368 & .205\\
Gen. cosine & .784 & .584 & .328 \\
\bottomrule
\end{tabular}
\par\medskip
\centering
\caption{One search-based inversion case shared across the evaluated releases. Excerpts are shortened; Appendix~\ref{app:inversion-cases} gives the complete outputs and three additional cases. Verifier cosine measures semantic agreement with the target.}
\label{tab:eia-case-study}
\footnotesize
\resizebox{\columnwidth}{!}{%
\begin{tabular}{@{}lp{0.59\columnwidth}r@{}}
\toprule
Release & Target or reconstruction excerpt & Cosine \\
\midrule
Target & ``Welcome to the U.S. High School Bowling Foundation ... promotes the growth of high school bowling.'' & -- \\
Float & ``US Bowling Foundation. The nonprofit organization ... 501(C) ...'' & .903 \\
Learned hash & ``US Bank is affiliated by National Bank Union ...'' & .464 \\
Gaussian, $\varepsilon=8$ & ``The country Australia encompasses expansive territories ...'' & $-.060$ \\
\bottomrule
\end{tabular}
}
\end{table}

\begin{figure}[t]
\centering
\includegraphics[width=\linewidth]{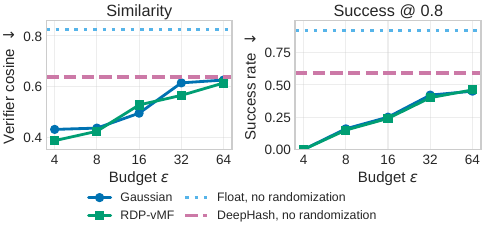}
\caption{Search-based embedding inversion under Gaussian and RDP-vMF randomization at $\rho_h=2$. The left panel reports mean verifier cosine similarity and the right panel reports attack success at a threshold of 0.8. Horizontal lines show the float and no-DP learned-code references.}
\label{fig:eia-results}
\end{figure}

\begin{figure}[t]
\centering
\includegraphics[width=\linewidth]{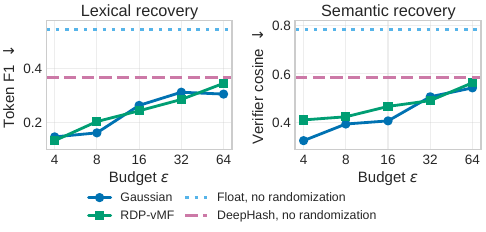}
\caption{Generation-based embedding inversion under Gaussian and RDP-vMF randomization.}
\label{fig:geia-results}
\end{figure}

\noindent\textbf{Tighter randomization suppresses embedding inversion.} Figure~\ref{fig:eia-results} shows that both mechanisms suppress successful search-based reconstruction at tighter privacy budgets, degrading smoothly when $\varepsilon$ relaxes. Generation-based inversion exhibits the same privacy-budget response (Figure~\ref{fig:geia-results}): randomized releases reveal less lexical and semantic information than the learned code, with stronger suppression at smaller $\varepsilon$.

\begin{table}[H]
\centering
\caption{E5 property-inference macro F1 for the validation-selected best attacker; lower is better. Entries average five stratified splits.}
\label{tab:dp-pia}
\resizebox{0.95\columnwidth}{!}{%
\begin{tabular}{@{}lrcccc@{}}
\toprule
Release & $\varepsilon$ & Topic & Sentiment & Authorship & Mean \\
\midrule
Gaussian & 8  & .6001 & .4763 & .0862 & .3875 \\
Gaussian & 16 & .6303 & .5402 & .1002 & .4235 \\
Gaussian & 32 & .7024 & .5849 & .1303 & .4725 \\
Gaussian & 64 & .8221 & .6575 & .1875 & .5557 \\
\addlinespace
RDP-vMF & 8  & .5980 & .5021 & .0727 & .3909 \\
RDP-vMF & 16 & .6251 & .5528 & .0924 & .4234 \\
RDP-vMF & 32 & .6713 & .6057 & .1235 & .4668 \\
RDP-vMF & 64 & .8325 & .6515 & .1550 & .5463 \\
\midrule
Learned hash & $\infty$ & .8774 & .7383 & .2240 & .6133 \\
Float reference & $\infty$ & .8872 & .7986 & .3876 & .6911 \\
\bottomrule
\end{tabular}%
}
\end{table}

\noindent\textbf{Randomization further weakens property inference.} Property inference confirms the same layered effect (Table~\ref{tab:dp-pia}): learned hashing removes attribute signal relative to the float representation, and randomization further weakens topic, sentiment, and authorship inference as $\varepsilon$ tightens. Gaussian and RDP-vMF yield similar attack leakage and utility trade-offs. We leave mechanisms that provide better utility--privacy trade-offs under the same $\varepsilon$ guarantee to future work.

\section{Concluding Remarks}

We presented a practical two-party private dense-retrieval design for provider-held corpora serving external users. Concentrating private computation on a high-recall shortlist preserves retrieval quality and practical latency at million-document scale, while layered protections limit query and selection leakage and enforce the per-query payload allowance. 
Evaluations against embedding-inversion and property-inference attacks show that the released representation reduces reconstruction fidelity and attribute leakage. 
The resulting deployment reconciles interests that usually conflict: legitimate users receive privacy-preserving, precise semantic search, while corpus owners retain control over valuable content and align disclosure with the service's billing model.

\bibliographystyle{plain}
\bibliography{references}

\appendix

\section*{Ethical Considerations}
\label{sec:ethics}

This work proposes a defense solution in a privacy-sensitive two-party scenario, which aims to \textit{reduce} privacy exposure in dense retrieval against attacks documented in the literature~\cite{song2020information,morris2023text,li2023geia,huang2024transferable,chen2024text,zhang2025universal}. In particular, \textit{no} new attack algorithms beyond the reproduction and adaptation of existing work are presented.

\paragraph{Dual-use of attack reproduction.} We re-implement published embedding-inversion attacks~\cite{zhang2025universal,song2020information,li2023geia} to measure the exposure of the representations without protection and confirm the concerning leakage among them. The experiments reproduce existing attack capabilities on public benchmarks and pretrained encoders; they use \textit{no} proprietary corpus or production system.

\paragraph{Use of human-generated data.} All datasets are publicly available research benchmarks released for academic use~\cite{thakur2021beir,zhang2018personachat,lang1995newsweeder,zhang2015character,maas2011learning}. The study collects \textit{no} new data and involves \textit{no} interaction with human subjects.

\paragraph{Security boundary.} The formal query guarantee covers an honest-but-curious Owner observing the quantized randomized code, candidate set, and protocol metadata. For a conforming secret-key User, compact ciphertext-simulatable BFV restricts the scoring ciphertext to the explicit $K$-score oracle, while active-secure OT limits a malicious receiver's per-query payload-key recovery.
We note that the paper source code only presents a research artifact that meets the declared security guarantees and it might miss additional consideration that must be in place for real-world deployment. For example, commercial deployment shall additionally bind identities, protect endpoint keys, authenticate and encrypt both transport channels, and address side channels beyond the measured message lengths and timing.

\paragraph{Responsible disclosure.} The evaluation reproduces public attacks and addresses the general bi-encoder retrieval pipeline rather than a product-specific vulnerability, so coordinated vendor disclosure is not applicable.

\section{Proof Details and Auxiliary Calibration}
\label{app:privacy-proofs}

\subsection{Analytic Gaussian Baseline}

The additive Gaussian baseline acts on the bounded, unnormalized hash-head vector \(h\in[-1,1]^L\) rather than on the direction protected by Theorem~\ref{thm:vmf-pure}. For \(G_\sigma(h)=h+\mathcal{N}(0,\sigma^2I)\), two inputs at Euclidean distance \(D>0\) have the exact privacy profile~\cite{balle2018gaussian}
\begin{equation}
\delta_G(\varepsilon,D,\sigma)=
\Phi\!\left(\frac{D}{2\sigma}-\frac{\varepsilon\sigma}{D}\right)
-e^\varepsilon
\Phi\!\left(-\frac{D}{2\sigma}-\frac{\varepsilon\sigma}{D}\right),
\label{eq:analytic-gaussian}
\end{equation}
with \(\delta_G=0\) at \(D=0\). The profile increases with \(D\), so calibration at Euclidean radius \(\rho_h\) solves \(\delta_G(\varepsilon,\rho_h,\sigma)\le\delta_0\) at the boundary.

Table~\ref{tab:gaussian-calibration} gives the numerically solved scales used by our implementation for \(\rho_h=2\) and \(\delta_0=10^{-6}\). Each regression test substitutes the result into Equation~\ref{eq:analytic-gaussian} and checks the target profile, and the retrieval and attack sweeps use these calibrated scales.

\begin{table}[t]
\centering
\caption{Analytic Gaussian scales for bounded pre-sign representations at Euclidean radius \(\rho_h=2\) and \(\delta_0=10^{-6}\).}
\label{tab:gaussian-calibration}
\small
\begin{tabular}{@{}rrrrr@{}}
\toprule
\(\varepsilon\) & 8 & 16 & 32 & 64 \\
\midrule
\(\sigma\) & 1.3059 & 0.7372 & 0.4324 & 0.2638 \\
\bottomrule
\end{tabular}
\end{table}

Gaussian and vMF results protect different input spaces: \(G_\sigma\) uses Euclidean adjacency on \(h\), whereas \(M_\kappa\) uses angular or chord adjacency on \(u=h/\|h\|_2\). Their numerical privacy budgets must therefore be reported with the protected space and radius, not as a mechanism ranking under an unspecified common adjacency.

\subsection{Approximate-vMF Calibration}

For mean directions separated by angle \(\theta\), rotate coordinates so that \(u=e_1\) and \(u'=\cos\theta\,e_1+\sin\theta\,e_2\). Under \(Y\sim\mathrm{vMF}(u,\kappa)\), the privacy-loss random variable is
\begin{equation}
L(Y)=\kappa\bigl((1-\cos\theta)Y_1-\sin\theta Y_2\bigr).
\end{equation}
The corresponding pre-binarization hockey-stick divergence is
\begin{equation}
\label{eq:vmf-hockey-stick}
\begin{aligned}
\delta_{\mathrm{vMF}}(\varepsilon,\theta,\kappa)
&= \Pr_u\bigl[L(Y)>\varepsilon\bigr]\\
&\quad- e^\varepsilon\,\Pr_u\bigl[L(Y)<-\varepsilon\bigr].
\end{aligned}
\end{equation}
A radius-\(\rho\) approximate guarantee requires the supremum of Equation~\ref{eq:vmf-hockey-stick} over every \(\theta\in[0,\rho]\). Our quadrature evaluates the boundary, while certified approximate calibration additionally requires establishing the maximizing angle. The protocol uses the exact pure calibration of Theorem~\ref{thm:vmf-pure}; the RDP-vMF sweep serves as the empirical mechanism comparison.

\subsection{Packed-Layout Invariants}
\label{app:packed-invariants}

At degree 8192, BFV batching provides two rows of 4096 slots. Each row is divided into four 1024-slot segments, so a score group contains eight candidates. Query encryption repeats the \(d\)-coordinate vector in every segment and fills the remaining slots with zero; Owner encoding places one candidate in each corresponding segment.

After componentwise plaintext--ciphertext multiplication, rotations by \(1,2,\ldots,512\) and additions place each segment's dot product at its anchor. The multi layout returns that group ciphertext directly, yielding \(\lceil K/8\rceil\) ciphertexts; it specifies correctness only at the anchors and does not zero the remaining slots. The compact layout multiplies by a plaintext mask that retains only the eight anchors, rotates group \(g\) to residue \(g\bmod1024\), and adds 1024 groups per output ciphertext. Different groups then occupy different residues at every segment anchor, yielding \(\lceil K/8192\rceil\) ciphertexts. The final partial group and partial compact output contain only zero-padded dummy candidates.

These layout invariants establish where each modular score is decoded; Equation~\ref{eq:bfv-bound} establishes that its centered residue is the intended signed integer. Circuit correctness additionally assumes sufficient BFV noise budget, which the implementation checks empirically through successful decryption and oracle equality over both concrete layouts. Circuit privacy and malicious-input validity are separate properties governed by the scope in \S\ref{sec:bfv-correctness}.

\subsection{Cryptographic Hybrid Details}

For Theorem~\ref{thm:owner-view}, begin with the real Owner view for a conforming User. Receiver privacy of active OOS OT replaces the receiver-choice-dependent messages with a simulated transcript for the same sender inputs. BFV IND-CPA replaces the valid encryption of the clean int8 query by an encryption of zero; applying the public scoring circuit and serializing its output cannot increase the Owner's distinguishing advantage. The remaining query-dependent plaintext is \(M_\kappa(u)\) and its Hamming-search and quota post-processing. Directional privacy gives Equation~\ref{eq:owner-view}, while the auxiliary fields in Equation~\ref{eq:owner-leakage} are held fixed. Theorem~\ref{thm:owner-view} therefore governs the Owner view; \S\ref{sec:bfv-correctness} separately specifies the secret-key User view.

For Theorem~\ref{thm:payload-access}, active receiver security restricts each row to one OT option key. In the random-oracle hybrid, the hash of every unchosen option key is independent of the receiver's view, making its masked content key uniform. Replacing the unselected derived payload keys by random keys then reduces disclosure of any remaining plaintext to the confidentiality of ChaCha20--Poly1305; ciphertext modification is rejected by its authentication check. Across \(k\) rows, the receiver obtains at most \(k\) content keys, with duplicate selections yielding fewer distinct payloads.

\section{Experimental Details}
\label{app:ablation}

\subsection{Experimental Setup Details}
\label{app:setup}

We provide the training and evaluation details needed to reproduce the two-forward retrieval results in Table~\ref{tab:main}.

\textbf{Evaluation Corpora.} The five BEIR corpora originate from SciDocs~\cite{cohan2020scidocs}, Natural Questions~\cite{kwiatkowski2019natural}, DBpedia-Entity~\cite{hasibi2017dbpedia}, FEVER~\cite{thorne2018fever}, and Climate-FEVER~\cite{diggelmann2020climate}. Climate-FEVER adapts FEVER's claim--evidence methodology from artificially constructed general-domain claims to real-world climate claims and includes disputed evidence. Their dataset corpus are largely overlapping, but they have distinct query distributions.

\textbf{Metrics.} NDCG@$k$ is the standard normalized discounted cumulative gain at rank $k$, which rewards relevant documents appearing higher in the ranked list and normalizes against the ideal ordering. Recall@$k$ is the fraction of gold-relevant documents appearing in the top-$k$ results. In particular, the former is sensitive to ranking among the top-$k$, while the latter is only sensitive to whether the relevant documents appear, regardless of their specific ranking.

\textbf{Models and Hard Negatives.} E5-base-v2 uses mean pooling and a 256-bit hash head, whereas BGE-base-en-v1.5 uses its CLS representation and a 512-bit head. We retrieve 512 BM25~\cite{robertson1994bm25} candidates per MS MARCO training query, rerank them with the ``ms-marco-MiniLM-L6-v2''\footnote{https://huggingface.co/cross-encoder/ms-marco-MiniLM-L6-v2} cross-encoder~\cite{wang2020minilm}, retain the top 32, form a pool from the five highest-ranked non-relevant passages, and sample three negatives per query.
A mined negative is removed from the binary ranking loss when its cross-encoder score is within 0.5 of the labeled positive.
We also tried stronger cross-encoders, larger mining pools, or more hard-negatives per query during our experiments, but none of them gives noticeably better results, while more hard-negatives even harm the evaluated quality.

\textbf{Training Objective.} A batch contains queries $q_i$, labeled passages $p_i$, and mined-negative sets $\mathcal N_i$ for $i\in\{1,\ldots,B\}$. Let $\mathbf a_x$ and $\mathbf a_x^0$ denote the normalized pooled representations of text $x$ from the adapted and frozen encoders, respectively. The hash head produces logits $\mathbf z_x=W\,\mathrm{pool}(E_{\mathrm{LoRA}}(x))$ and training code $\mathbf h_x=\tanh(\beta\mathbf z_x)$; deployment uses $\bcode_x=\sign(\mathbf z_x)$, where $\sign(t)=+1$ for $t\ge 0$ and $-1$ otherwise. We write
\begin{equation}
\begin{aligned}
s^a(q,d)&=\mathbf a_q^\top\mathbf a_d,\\
s^0(q,d)&=(\mathbf a_q^0)^\top\mathbf a_d^0,\\
s^h(q,d)&=\mathbf h_q^\top\mathbf h_d/L.
\end{aligned}
\label{eq:training-scores}
\end{equation}
and let $c(q,d)$ be the cross-encoder relevance logit. For a score function $s$, candidate set $\mathcal D$, and temperature $\tau$, define
\begin{equation}
P_{i,\tau}^{s,\mathcal D}(d)=\frac{\exp(s(q_i,d)/\tau)}{\sum_{d'\in\mathcal D}\exp(s(q_i,d')/\tau)}.
\label{eq:training-soft-ranking}
\end{equation}
Let $\mathcal D_B=\{p_j\}_{j=1}^{B}\cup\bigcup_j\mathcal N_j$ be all passages in the batch. The implemented groups in Equation~\ref{eq:hash-objective-groups} expand as
\begin{equation}
\begin{aligned}
\mathcal{L}_{\mathrm{retrieval}} &= \mathcal{L}_{\mathrm{InfoNCE}}+\lambda_{\mathrm{bin}}\mathcal{L}_{\mathrm{bin}},\\
\mathcal{L}_{\mathrm{transfer}} &= \lambda_{\mathrm{rankKD}}\mathcal{L}_{\mathrm{rankKD}},\\
\mathcal{L}_{\mathrm{regularization}} &= \lambda_{\mathrm{floatKD}}\mathcal{L}_{\mathrm{floatKD}}+\lambda_{\mathrm{GOR}}\mathcal{L}_{\mathrm{GOR}}.
\end{aligned}
\label{eq:hash-objective-full}
\end{equation}
The continuous retrieval term is
\begin{equation}
\mathcal L_{\mathrm{InfoNCE}}=-\frac{1}{B}\sum_{i=1}^{B}\log P_{i,\tau}^{s^a,\mathcal D_B}(p_i).
\label{eq:infonce-loss}
\end{equation}
For false-negative margin $m$, we retain $\mathcal M_i=\{n\in\mathcal N_i:c(q_i,n)<c(q_i,p_i)-m\}$ and define $\mathcal I=\{i:\mathcal M_i\ne\varnothing\}$. The ranking term applied to the deployed binary codes is
\begin{equation}
\begin{aligned}
\ell_i&=\tau_b\log\sum_{n\in\mathcal M_i}e^{(s^h(q_i,n)-s^h(q_i,p_i))/\tau_b},\\
\mathcal{L}_{\mathrm{bin}}&=\frac{1}{|\mathcal I|}\sum_{i\in\mathcal I}\operatorname{softplus}(\ell_i/\tau_b).
\end{aligned}
\label{eq:binary-ranking-loss}
\end{equation}
Here $\operatorname{softplus}(u)=\log(1+e^u)$. E5 uses this direct Hamming-space term; BGE obtains stronger candidate recall from RankKD alone and sets $\lambda_{\mathrm{bin}}=0$.

RankKD (ranking knowledge distillation) transfers the adapted continuous ranking over all in-batch passages:
\begin{equation}
\mathcal L_{\mathrm{rankKD}}=\frac{1}{B}\sum_{i=1}^{B}\mathrm{KL}\!\left(P_{i,\tau}^{s^a,\mathcal D_B}\,\middle\|\,P_{i,\tau}^{s^h,\mathcal D_B}\right).
\label{eq:rankkd-loss}
\end{equation}
RankKD transfers batch-wide ordering without using labels from an evaluation corpus; the cross-encoder contributes difficult local examples through hard-negative mining.

The geometry terms stabilize the representation that supplies these rankings. For the batch collections $\mathcal X_Q=\{q_i\}_i$, $\mathcal X_P=\{p_i\}_i$, and $\mathcal X_N=\biguplus_i\mathcal N_i$, FloatKD (float-embedding distillation) aligns the adapted and frozen encoders:
\begin{equation}
\mathcal L_{\mathrm{floatKD}}=\frac{1}{3}\sum_{G\in\{Q,P,N\}}\frac{1}{|\mathcal X_G|}\sum_{x\in\mathcal X_G}\left[1-\cos\!\left(\mathbf a_x,\mathbf a_x^0\right)\right].
\label{eq:floatkd-loss}
\end{equation}
For any batch collection $\mathcal X$, GOR (global orthogonal regularization) uses the spread-out penalty
\begin{equation}
R_{\mathrm{GOR}}(\mathcal X)=\frac{1}{|\mathcal X|(|\mathcal X|-1)}\sum_{\substack{x,y\in\mathcal X\\x\ne y}}\cos^2(\mathbf h_x,\mathbf h_y),
\label{eq:gor-set-loss}
\end{equation}
We use $\mathcal L_{\mathrm{GOR}}=[R_{\mathrm{GOR}}(\mathcal X_Q)+R_{\mathrm{GOR}}(\mathcal X_P)+R_{\mathrm{GOR}}(\mathcal X_N)]/3$. FloatKD retains the pretrained geometry during LoRA adaptation, whereas GOR discourages different batch examples from collapsing to similar directions before binarization.

\textbf{Discrete Optimization and Hyperparameters.} The reported models use the differentiable code $\mathbf h=\tanh(\beta\mathbf z)$ throughout training; $\beta$ rises linearly from 1 to 2.5 during the first quarter and then remains fixed, and deployment applies $\sign(\mathbf z)$. We apply LoRA to every encoder linear layer with rank 16, $\alpha=16$, and dropout 0.05. Each epoch contains 300 steps at batch size 128, and both models train for 16 epochs. The encoder learning rate is $2\times10^{-6}$ for E5 and $5\times10^{-6}$ for BGE, while the hash-head rate is $2\times10^{-4}$ for both. InfoNCE has unit weight; $\lambda_{\mathrm{bin}}=0.8$ for E5 and 0 for BGE. We set $\lambda_{\mathrm{rankKD}}=\lambda_{\mathrm{floatKD}}=\lambda_{\mathrm{GOR}}=1$, with $\tau=0.05$, $\tau_b=0.1$, and $m=0.5$. Evaluation uses an exponential moving average with decay 0.999.

\textbf{Two-Forward Evaluation.} Stage~1 runs the trained hash encoder to retrieve exact Hamming top-$K$ candidates. Stage~2 independently runs the original pretrained encoder and reranks only those candidates by float similarity; it never uses the hash model's continuous representation. Both forwards share tokenization and host-to-device transfer. E5 uses the standard ``query:'' and ``passage:'' prefixes, BGE uses its standard query instruction and no document prefix, and maximum query and document lengths are 48 and 512 tokens. We evaluate in BF16 with the LoRA weights merged into the Stage~1 encoder.

\textbf{DP Retrieval Sweeps.} We evaluate E5 and BGE on SciDocs, NQ, and FEVER at $K\in\{200,500,1000,2000,3000\}$. Gaussian, RDP-vMF, and pure-vMF use $\varepsilon\in\{8,16,32,64\}$; Gaussian and RDP-vMF set $\delta=10^{-6}$, while pure-vMF uses the exact calibration in Theorem~\ref{thm:vmf-pure}. Each randomized query code is searched by exact Hamming distance, and the unchanged pretrained encoder reranks the resulting candidates.

\textbf{Hash Baselines.} Table~\ref{tab:e5-hash-baselines} uses the same cached pretrained E5 query and corpus embeddings and the same exact Hamming and Stage~2 routines as Table~\ref{tab:main}. Direct sign thresholds all 768 embedding coordinates at zero. Random-hyperplane LSH~\cite{charikar2002similarity} uses a fixed 256-column Gaussian projection; Super-Bit LSH~\cite{ji2012superbit} orthogonalizes all 256 columns as one maximum-depth Super-Bit. PCA-sign, ITQ~\cite{gong2011itq}, and IsoHash~\cite{kong2012isotropic} are fitted on normalized embeddings of the first 20,000 MS MARCO corpus passages, with 50 ITQ and 100 IsoHash rotation updates, then applied without target-corpus fitting. Every Stage~2 score uses the unchanged full-precision pretrained E5 embedding, so the reported differences come only from candidate membership in Stage~1.

\subsection{Qualitative Candidate-Set Cases}
\label{app:candidate-cases}

Table~\ref{tab:candidate-cases-extra} extends Table~\ref{tab:coarse-neighborhood-case} with three NQ queries under the same E5 pure-vMF operating point. The nearest Hamming items preserve broad cues---constitutional amendments, elevation and battlefields, or quarters of a year---but do not supply the requested count, location, or calendar rule. The answer-bearing passages occur much deeper in the coarse ranking and move into the User's top ten only after clean-query Stage~2 scoring.

\begin{table*}[t]
\centering
\caption{Three additional NQ candidate-set cases under E5 pure-vMF ($\varepsilon=64$, $K=3000$, $k=10$, 256 bits). Stage~2 entries show coarse Hamming rank $\rightarrow$ clean-float rank.}
\label{tab:candidate-cases-extra}
\footnotesize
\setlength{\tabcolsep}{3pt}
\begin{tabular}{@{}cl>{\centering\arraybackslash}p{0.10\textwidth}>{\raggedright\arraybackslash}p{0.68\textwidth}@{}}
\toprule
Case & View & Rank / $d_H$ & Target or passage excerpt \\
\midrule
A & Target & -- & \emph{How many amendments to the Constitution have there been?} \\
& DP-Ham. & 1 / 60 & \emph{First Amendment to the United States Constitution:} ``The civil rights of none shall be abridged~\ldots'' \\
& DP-Ham. & 2 / 60 & \emph{Limited government:} ``The Ninth and Tenth Amendments~\ldots'' \\
& DP-Ham. & 3 / 60 & \emph{Article Three of the United States Constitution:} ``Hamilton continues~\ldots'' \\
& Stage~2 & 1742$\rightarrow$1 / 82 & \emph{List of amendments to the United States Constitution:} ``Thirty-three amendments~\ldots{} have been proposed~\ldots{} Twenty-seven~\ldots{} are part of the Constitution.'' \\
\addlinespace
B & Target & -- & \emph{Where is the world's highest battlefield located?} \\
& DP-Ham. & 1 / 68 & \emph{Operation Market Garden:} ``The country was wooded and rather marshy~\ldots{} two important hills~\ldots{} represented some of the highest ground in the Netherlands.'' \\
& DP-Ham. & 2 / 70 & \emph{List of elevation extremes by country:} ``The Dead Sea is the lowest point on Earth.'' \\
& DP-Ham. & 3 / 70 & \emph{Geography of China:} ``Tallest mountain peaks.'' \\
& Stage~2 & 2317$\rightarrow$2 / 87 & \emph{Siachen Glacier:} ``The glacier's region is the highest battleground on Earth~\ldots{} Pakistan and India~\ldots{} maintain a permanent military presence.'' \\
\addlinespace
C & Target & -- & \emph{Explain what happens to the extra quarter of a day each calendar year.} \\
& DP-Ham. & 1 / 67 & \emph{Calendar year:} ``The calendar year can be divided into four quarters~\ldots'' \\
& DP-Ham. & 2 / 69 & \emph{Fiscal year:} ``The Financial year is split into the following four quarters.'' \\
& DP-Ham. & 3 / 71 & \emph{Accounting period:} ``The end of the fiscal year would move one day earlier~\ldots'' \\
& Stage~2 & 555$\rightarrow$8 / 88 & \emph{Leap year:} ``Adding one extra day in the calendar every four years compensates for~\ldots{} almost 6 hours.'' \\
\bottomrule
\end{tabular}
\end{table*}

Together with Table~\ref{tab:coarse-neighborhood-case}, these cases show the intended resolution split: coarse-code proximity can reveal a subject or isolated lexical cues, while exact answer selection depends on the clean representation evaluated inside the protected scoring stage.

\subsection{Attack Implementation Details}
\label{app:attacks}

We specify the attacker's observations, optimization procedure, and output for each evaluated attack.

\textbf{Search-Based Inversion.} We evaluate the embedding-guided adversarial-decoding and iterative-refinement stages of ZSInvert~\cite{zhang2025universal}. Given a released target representation, the attacker initializes Llama-3.1-8B-Instruct~\cite{llama2024herd} with the prompt ``Write a factual passage.'' At each decoding step, the LLM proposes its ten most likely next tokens for every active partial passage; the target encoder maps each expansion into the released representation space, similarity to the target ranks the expansions, and the best 50 remain in the beam. The highest-scoring completed passage becomes the seed for the next round, whose prompt asks for a factual passage similar to that seed. We run six search rounds with at most 80 token steps per round. Float targets use cosine similarity, while hash targets use normalized hash dot product, an affine transformation of Hamming distance. A separate MiniLM verifier~\cite{wang2020minilm} measures semantic similarity between the final reconstruction and the private target text; we report the mean over 100 sampled targets.

\textbf{Generation-Based Inversion.} Following GEIA~\cite{li2023geia}, the attacker collects auxiliary PersonaChat~\cite{zhang2018personachat} text--representation pairs and trains an embedding-conditioned DialoGPT-medium~\cite{zhang2020dialogpt} decoder. A learned linear layer maps each released representation to the decoder hidden dimension and prepends it as a pseudo-token before the ground-truth token embeddings; teacher-forced cross-entropy then trains both the projection and causal LM to predict the original passage. We train a separate decoder for each released representation for 10 epochs at a learning rate of $1\!\times\!10^{-5}$ and a batch size of 16. At test time, the held-out representation alone conditions width-5 beam decoding for at most 50 new tokens, producing one reconstruction per target; Table~\ref{tab:inversion-nodp} reports the test-split mean.

\textbf{Property Inference.} Following Song and Raghunathan~\cite{song2020information}, the attacker receives an auxiliary collection of Stage~1 releases with known attributes, fits supervised probes on those representation--attribute pairs, and predicts the attribute of independently held-out victim releases. We test four-way topic classification on AG News~\cite{zhang2015character}, binary sentiment on IMDB~\cite{maas2011learning}, and 50-way closed-set authorship attribution on 20 Newsgroups~\cite{lang1995newsweeder}. For authorship, the label is the email address in the \texttt{From:} header among the 50 most prolific authors; we remove the complete header block before encoding the message body. Every sample is encoded as the deployed Stage~1 query release using the current E5 hash checkpoint, the ``query:'' prefix, a 48-token limit, and the mechanism under evaluation. Each of five seeds creates stratified 60/20/20 attacker-training, validation, and victim-test partitions. Logistic regression, a two-hidden-layer MLP, and LightGBM~\cite{ke2017lightgbm} are fitted on the attacker-training split; validation macro F1 selects the architecture, and Table~\ref{tab:dp-pia} reports its victim-test macro F1 averaged across seeds.

\subsection{Qualitative Inversion Cases}
\label{app:inversion-cases}

Table~\ref{tab:eia-cases-full} reports the complete decoded strings for the case in Table~\ref{tab:eia-case-study} and three additional targets. Across the four cases, the float release supports reconstruction of specific entities, facts, and relations. The learned hash usually retains a broad topic or lexical association while dropping identifying details. Gaussian randomization removes even that stable association, producing outputs unrelated to the target.

\begin{table*}[t]
\centering
\caption{Complete search-based inversion outputs for four MS MARCO targets. Case A is the example summarized in Table~\ref{tab:eia-case-study}; Cases B--D are additional examples. Cosine is measured by the independent verifier.}
\label{tab:eia-cases-full}
\scriptsize
\setlength{\tabcolsep}{3pt}
\begin{tabular}{@{}cl>{\raggedright\arraybackslash}p{0.73\textwidth}r@{}}
\toprule
Case & Release & Complete target or reconstruction & Cosine \\
\midrule
A & Target & Welcome to the U.S. High School Bowling Foundation. The U.S. High School Bowling Foundation (USHSBF) is an IRS 501 (c)(3) non-profit organization that actively promotes the growth of high school bowling throughout the country. & -- \\
& Float & US Bowling Foundation. The nonprofit organization. USABF or HS, is an (non profit.)501, 501(C). US bowling ( foundation) or the bowling ( foundation) (U.s) & .903 \\
& Learned hash & US Bank is affiliated by National Bank Union(UFBG,NFUB or more informlly refererd UBFGU) through UBFBU-UnitedStatebannkinngunitiationthough there name oficialally stands out too - unitesstatebunkfinfinitation. However alot offpeople recognizeit with thershorthandletternamr ofunsttobbankor & .464 \\
& Gaussian, $\varepsilon=8$ & The country Australia encompasses expansive territories spanning over seven major regions with diverse geology, resulting form varied processes across historical timescale having several distinct natural environments within different territoories - Northern Territories comprising land that shares the Timoresse Sea and extends from southern coast to Indonesian borders as well northern territory known as Western Artesia, along the Great Victoria river system in addition the southeastern coastline forms extensive shoreline along & $-.060$ \\
\addlinespace
B & Target & What does FGA stand for? What does FGA mean? This page is about the various possible meanings of the acronym, abbreviation, shorthand or slang term: FGA. & -- \\
& Float & FAA acronym definition abbreviation. What meaning? short name F.A. Association meaning? What stands: stands for. & .697 \\
& Learned hash & The Aviation Alphabet Standards for pilots. A commonly accepted and well-diffused standardization was put forward with these phrases - Uniform - Mike Victor Lindberqg (L), Si-gmah Si-bay, Fo-sure-Fo-sho Fos- ter (R). & .204 \\
& Gaussian, $\varepsilon=8$ & An Intergregaler Frost (also sometimes spelled Yuki-On) event periodically transpires across Arctic archipelagos located within northern Norway towards summations occurring after equinocial solar events. & .031 \\
\addlinespace
C & Target & One pound of fat is 3,500 calories. If you simply eat 500 calories less per day, then in seven days that adds up to a 3,500 calorie deficit and you'd have lost one pound of fat. Fitness is science, not magic. (7 days x 500 calorie deficit = 3,500) & -- \\
& Float & seven days calories deficit per pound fat approximately =500 calorie workouts three * day eating one-pound less * is 500 calorie one day * equals fat one day loss * ( 500* (1 pound * ( 7 days * ( & .879 \\
& Learned hash & The ideal calorie deficit for fat and muscular pounds weightloss is approximately three pounds or so within eight weeks at rate equivalent of nearly one-further half pound in weight each seven-days. & .732 \\
& Gaussian, $\varepsilon=8$ & Antarctica Ross Land Sea ice field exhibits groundbreaking discoveries showcasing glaucial layers with crucial data suggesting expansive glacier formations reaching up to fifty seven-and twenty-two hundred-thoussd-year intervals dating around eleven and ninety eight miiiillion-earth years into the distant prehistirical. & $-.004$ \\
\addlinespace
D & Target & Alkali metals react with nonmetals to form ionic compounds. In these types of reactions, the alkali metal gives up its outermost electron to a nonmetal that is greedy for electrons. Reactions like this that involve an element exchanging an electron with another is called an oxidation-reduction or redox reaction. & -- \\
& Float & Reactive alkalide metals reaction nonmetal oxido make other. An Ionic forms bonds elements ox and compounds oxidic and metal nonreducible react with. Alternatively, reactive acid nonmetals make other compounds. & .824 \\
& Learned hash & Throughout chemistry ions form and transform, resulting in diverse mechanisms of conductivity modification as atoms exhibit changes of state along the way. & .413 \\
& Gaussian, $\varepsilon=8$ & The selenicerids (specific member from order of braniophorous gastreoids including species classified under Thylakocysticida) exhibit unique characteristics across various subtaxons found residing throughout multiple ecosystems such as cold sea floors adjacent volcanic regions and along with their habitat ranging near geothermic hot vent zones, abysses of ocean and river-mouth habitats located primarily close by island nations and & .052 \\
\bottomrule
\end{tabular}
\end{table*}

Cases A and B show how a low-bit release can preserve form-level cues such as an organizational acronym while losing the entity itself. Case C retains the broad calorie-deficit relation but changes the quantities and time scale, while Case D retains chemistry vocabulary but drops the defining electron-transfer and redox relation. In every case, the Gaussian reconstruction switches to a different subject, matching the near-zero verifier cosine.

\section{Additional Retrieval Results}
\label{app:dp-results}

Figure~\ref{fig:dp-k-sweep} shows how the candidate budget recovers ranking quality at the representative $\varepsilon=32$ operating point. Gaussian approaches the no-DP curve by $K=1000$ on all three datasets, while pure-vMF continues to benefit from larger candidate pools on NQ and FEVER. The horizontal float references expose both the remaining candidate loss and the point at which increasing $K$ saturates.

\begin{figure*}[t]
\centering
\includegraphics[width=\textwidth]{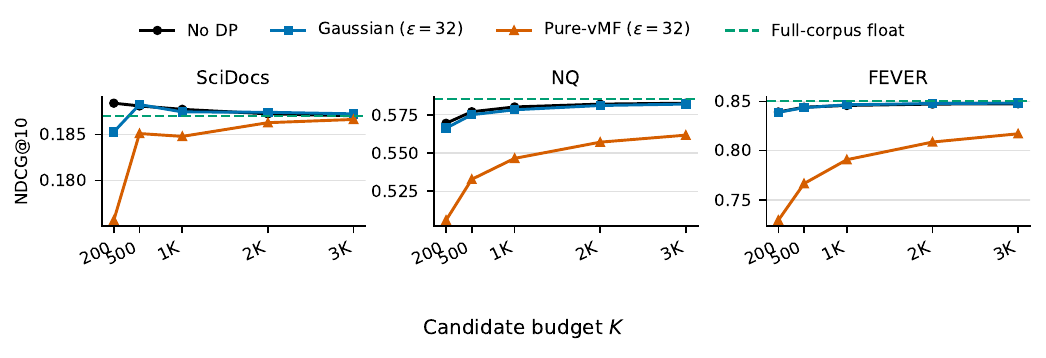}
\caption{E5 two-forward retrieval versus candidate budget. Gaussian and formal pure-vMF use $\varepsilon=32$; the full-corpus float reference is independent of $K$.}
\label{fig:dp-k-sweep}
\end{figure*}

\begin{table}[t]
\centering
\caption{End-to-end RAG with client-side cross-encoder reranking on 500 NQ queries. The opened top ten are reranked to the five passages supplied to Qwen.}
\label{tab:rag-cross-encoder}
\footnotesize
\setlength{\tabcolsep}{3pt}
\begin{tabular}{@{}lcrrr@{}}
\toprule
Retrieval & $(\varepsilon,K)$ & Hit & EM & F1 \\
\midrule
Full-corpus float & $(\infty,N)$ & 89.2 & 53.2 & 65.36 \\
No-DP hash & $(\infty,500)$ & 88.6 & 53.6 & 65.68 \\
Gaussian & $(16,3000)$ & 89.0 & 53.6 & 65.71 \\
RDP-vMF & $(16,3000)$ & 89.2 & 53.2 & 65.35 \\
Pure-vMF & $(64,3000)$ & 89.0 & 53.2 & 65.34 \\
\bottomrule
\end{tabular}
\end{table}

\DPEFiveFullTable

\DPBGEFullTable

Tables~\ref{tab:dp-e5-full} and~\ref{tab:dp-bge-full} report every mechanism, privacy budget, dataset, and candidate budget used in the retrieval evaluation. Both encoders exhibit the same operating pattern: approximate mechanisms saturate at smaller $K$, while formal pure-vMF requires a larger pool at tighter budgets and converges toward the float reference as $\varepsilon$ increases.

\subsection{Cross-Encoder Compatibility}
\label{app:rag}

The pretrained E5 scorer produces the top ten Natural Questions passages, matching the protocol's $k=10$ payload allowance. The User opens those passages through OT, locally reranks them with the MS MARCO MiniLM-L-6-v2 cross-encoder~\cite{wang2020minilm}, and supplies the top five to Qwen3-32B. The cross-encoder consumes the clean query and authorized plaintext payloads entirely on the User side.

Table~\ref{tab:rag-cross-encoder} shows that client-side reranking raises answer-bearing context coverage by 0.6–1.4 points and improves EM by at least 3.0 points. Every protected variant remains within 0.4 EM and 0.35 F1 of the cross-encoder float reference, so the learned filter composes with a standard retrieve–rerank–generate stack while retaining the protocol's payload-access bound.

\section{Protocol Implementation and End-to-End Latency}
\label{app:tee_latency}

This appendix presents the complete message sequence (\S\ref{app:protocol-sequence}), protocol implementation (\S\ref{app:protocol-implementation}), latency methodology (\S\ref{app:latency-methodology}), and matched comparison with RemoteRAG~\cite{chang2025remoterag}, PANTHER~\cite{li2025panther}, and P$^2$RAG~\cite{ming2026p2rag} (\S\ref{app:protocol-comparison})~\cite{chang2025remoterag,li2025panther,ming2026p2rag}.

\subsection{Complete Message Sequence}
\label{app:protocol-sequence}

Figure~\ref{fig:protocol-sequence} expands the compact flow in Figure~\ref{fig:protocol-overview} into the complete authenticated round, including session setup, round binding, quota reservation, OT extension, payload delivery, and commit.

\begin{figure*}[t]
\centering
\includegraphics[width=\textwidth]{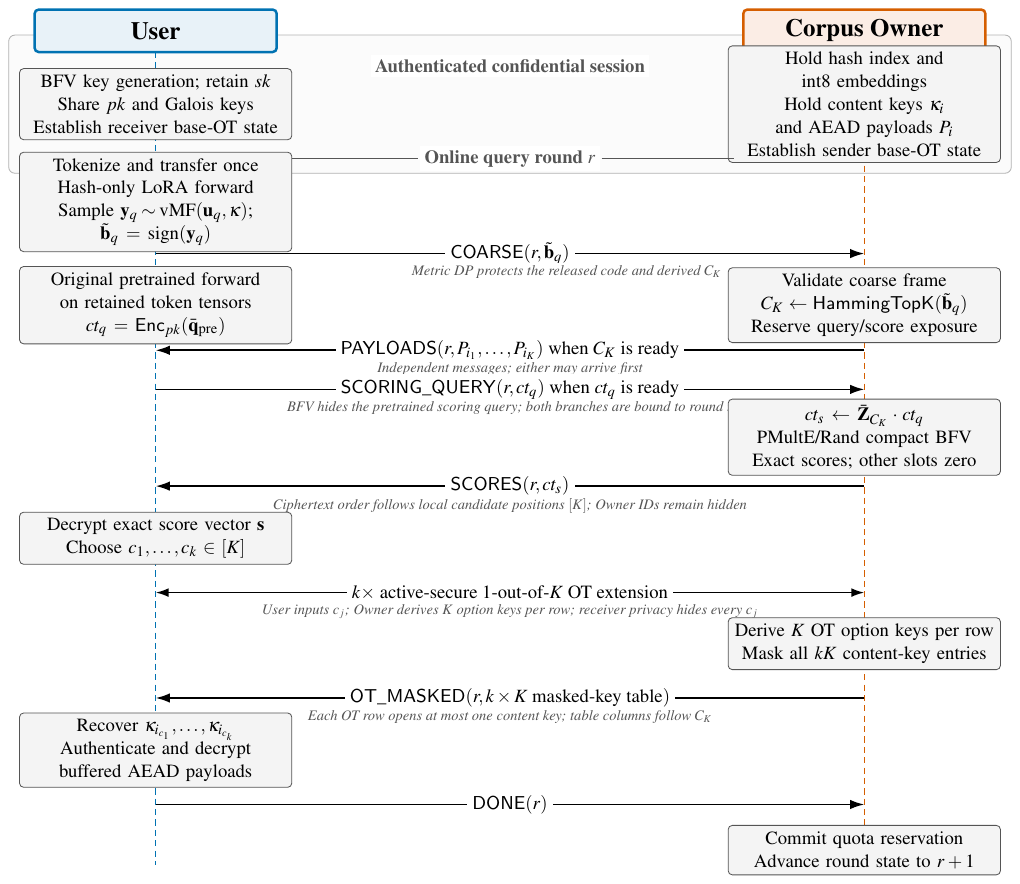}
\caption{Two-party message sequence for one authenticated query round. The coarse frame starts Owner-side Hamming search while the User runs the pretrained forward and BFV encryption. Payload streaming begins when $C_K$ is ready, the scoring query is sent when $ct_q$ is ready, and either message may arrive first; active-secure OT later hides the selected positions and releases only their content keys.}
\label{fig:protocol-sequence}
\end{figure*}

\subsection{Protocol Implementation}
\label{app:protocol-implementation}

\paragraph{BFV scoring.} Following the Homomorphic Encryption Standard's 128-bit classical-security parameters~\cite{homomorphicEncryptionStandard2018}, the Microsoft SEAL implementation uses polynomial-modulus degree 8192, a 25-bit batching plaintext modulus, and coefficient-modulus chains totaling 109 bits for the multi layout and 180 bits for the compact layout, within the recommended 218-bit ceiling. The exact signed-int8 contract clamps both operands to $[-127,127]$, giving $|\langle\bar{\mathbf{q}},\bar{\mathbf{z}}\rangle|\le 127^2d$ and exact agreement with an int32 oracle for $d=768$. The multi layout uses coefficient-modulus bits $(40,40,29)$ and 1024-slot segments, placing eight candidates in each output ciphertext with ten rotate-and-add steps. The compact layout uses $(50,40,40,50)$, masks each segment anchor, and rotates anchors into dense residues, reducing score traffic at the cost of a second multiplicative level. Layout-specific Galois keys contain only the required rotation steps. The protected compact path follows Hwang et al.'s PMultE and Rand construction~\cite{hwang2025ciphertext}: it samples each plaintext lift from the appropriate modulo-$t$ discrete-Gaussian coset, adds the prescribed rounded continuous-Gaussian multiplication error, and adds one public-key randomizer before the public rotation, mask, and addition subcircuit. The discrete sampler uses a high-precision 192-bit cumulative distribution table (CDT) truncated at eight times its Gaussian parameter; tables are constructed once when the Owner daemon starts and reused across rounds. The User creates a fresh symmetric query ciphertext, while the Owner holds only public and rotation keys. Compact masking makes every non-score plaintext slot zero, and randomized evaluation makes the full ciphertext view simulatable as stated in Theorem~\ref{thm:he-user-view}. The multi and fresh-zero paths remain comparison variants.

\paragraph{Long-lived HE roles.} A User daemon retains the BFV context, public key, and secret key and handles symmetric query encryption and parallel score decryption; an Owner daemon retains the public context, public key, Galois keys, and randomized-evaluation samplers and handles candidate scoring. Both daemons load their key material once, bind each command to an exact round identifier, and terminate on malformed input or computation failure. The default harness assigns 16 OpenMP threads to Owner scoring and 8 to User decryption, with one SEAL evaluator, decryptor, or batch encoder per worker where required.

\paragraph{Active-secure key transfer and payload protection.} The OT backend uses libOTe's OOS active-secure 1-out-of-$N$ extension~\cite{orru2016actively} with a 40-bit statistical check and a 16-bit option index, supporting project candidate budgets up to $K=60000$. Following the IKNP OT-extension organization~\cite{ishai2003extending}, the parties establish a small public-key base-OT correlation once per long-lived connection and derive each query's $k$ fresh extension rows with symmetric-key work. Each query advances the internal PRG state and performs a nonzero-challenge malicious check. The C++ sender derives and masks the full $k\times K$ option table before serialization, so raw option keys remain inside the Owner process. Each table entry XOR-masks a 128-bit content key with the first 128 bits of SHA-256 applied to a domain-separated OT key. Document plaintexts contain an encrypted true-length field, are padded in 4096-byte units, and are protected by one-shot ChaCha20--Poly1305 under an HKDF--SHA-256-derived key.

\paragraph{Network path and state.} The network design exchanges versioned frames carrying a session identifier, message type, round identifier, HE layout, $K$, and a length-capped payload. Session establishment binds the layout and $K$. A \textsc{Coarse} frame starts Hamming search while the User computes the pretrained representation; after $C_K$ is fixed, the Owner sends \textsc{Payloads} on the ordinary channel while the User sends the separate \textsc{Scoring-Query} frame and the Owner computes \textsc{Scores}. The parties then execute OT, the Owner sends \textsc{OT-Masked}, and the User opens the selected buffered payloads before \textsc{Done}. TCP delayed-ACK batching is disabled for these latency-sensitive control frames. Header fields and the 64-MiB payload cap are validated before payload allocation or blocking reads, and either-side failure closes the session and poisons reusable cryptographic state. With one 4096-byte payload block per candidate, the payload frame supports $K\le16256$; this bound exceeds the candidate budgets evaluated in the paper.

\subsection{Latency Measurement Methodology}
\label{app:latency-methodology}

We use two complementary measurement experiments. The corpus-level comparison runs all four protocols against the same E5-base-v2 query and document embeddings on SciDocs, Webis-Touch\'e, and NQ-1M. A separate two-process harness exercises our complete Owner/User message path after query-embedding generation and checks every recovered payload byte-for-byte against the retrieval oracle. Following private-retrieval benchmarking practice, we separate reusable key/index preprocessing from steady-state online latency while including every per-query encryption, HE evaluation, OT extension, payload transfer, and serialized byte~\cite{li2025panther,ming2026p2rag}.

\paragraph{Latency and traffic accounting.} Let $T_{\mathrm{post}}$ contain encrypted-score return, local decryption and selection, OT, and masked-key delivery. The pipelined critical path is $T_{\mathrm{hash}}+T_{\mathrm{DP}}+\max\{T_{\mathrm{Hamming}}+T_{\mathrm{payload}},\max(T_{\mathrm{Hamming}},T_{\mathrm{pre}}+T_{\mathrm{enc}})+T_{\mathrm{score}}+T_{\mathrm{post}}\}+T_{\mathrm{open}}$, where transfer terms are charged at the evaluated bandwidth. Payload bytes therefore overlap scoring rather than appearing as a serial suffix. Figure~\ref{fig:latency} reports this path, while Figure~\ref{fig:rag-latency} adds Qwen3-32B generation. Query-side Gaussian, RDP-vMF, and pure-vMF release leave the subsequent message flow unchanged; the DP rows use the largest measured release time, 0.92 ms for vMF.

\paragraph{Measurement scope.} Long-lived User and Owner BFV daemons keep secret and public evaluation material in their respective roles, while the active-secure libOTe daemons reuse base-OT state across rounds. The protocol-stage experiment uses $N=25{,}657$ synthetic rows and checks the complete HE-score, OT-key, and AEAD-payload chain against an int32 and byte-for-byte oracle. The RAG experiment measures 20 full-corpus float queries, 30 exact Hamming queries over all 2.68M NQ codes, and five generation prompts with 559--1178 input tokens; Qwen3-32B occupies 61.5~GiB. Network latency is computed from exact serialized byte counts; Figure~\ref{fig:latency} uses 10~Gbps and Table~\ref{tab:e2e-k} reports the 100-Mbps, 1-Gbps, and 10-Gbps projections.

\paragraph{Hardware.} Measurements were run on an AMD EPYC 9654 server with NVIDIA A100 GPUs under Linux 6.14. Encoder timing uses the default BF16, merged LoRA weights, 10 warm-up queries, and 50 measured queries. The BFV Owner uses 16 OpenMP threads, the User decryptor uses 8, and each cryptographic cell discards one full warm-up round before 50 measured rounds; the $K=500$ breakdown uses 50 measured rounds.

\begin{table}[H]
\centering
\caption{Qwen3-32B generation length and the $K=500$ protection overhead. Times are seconds per query.}
\label{tab:rag-latency-length}
\small
\begin{tabular}{@{}rrrrr@{}}
\toprule
Tokens & Generation & Plaintext & Protected & Overhead \\
\midrule
1 & .298 & .307 & .497 & 61.9\% \\
32 & 2.024 & 2.033 & 2.223 & 9.3\% \\
128 & 7.296 & 7.305 & 7.495 & 2.6\% \\
256 & 14.438 & 14.447 & 14.637 & 1.3\% \\
\bottomrule
\end{tabular}
\end{table}

\subsection{Matched Protocol Comparison}
\label{app:protocol-comparison}

\paragraph{Common benchmark contract.} All corpora use the same normalized 768-dimensional E5-base-v2~\cite{wang2022e5} vectors and top-10 output: 1,000 SciDocs queries over 25,657 documents, 49 Webis-Touch\'e queries over 382,545 documents, and all 3,452 NQ queries over NQ-1M, which retains every judged-relevant NQ document and fills the remaining positions from the corpus's original order.
Query encoding is excluded because it is identical across methods; online cryptographic setup, candidate search, secure scoring, selection, and serialized traffic are included. Long-lived keys and sessions exclude one-time setup. Figure~\ref{fig:protocol-comparison} demonstrates the exact traffic at 100~Mbps, while Table~\ref{tab:protocol-comparison-detail} uses 1~Gbps; the P$^2$RAG projection additionally charges 0.1~ms RTT for each declared online round because it requires two non-colluding servers.

\begin{table}[H]
\centering
\caption{Matched online latency at 1~Gbps in seconds per query. Ours uses the smallest $K$ retaining at least 99\% of float NDCG@10: 292, 104, and 1956, respectively. P$^2$RAG includes both client--server and inter-server transfer.}
\label{tab:protocol-comparison-detail}
\small
\begin{tabular}{@{}lrrr@{}}
\toprule
Method & SciDocs & Touch\'e & NQ-1M \\
\midrule
Ours & .152 & .071 & .796 \\
P$^2$RAG & .155 & 2.254 & 5.012 \\
RemoteRAG & 3.001 & 5.256 & 7.167 \\
PANTHER & 8.126 & 20.389 & OOM \\
\bottomrule
\end{tabular}
\end{table}

\paragraph{Our protocol.} We use pure-vMF at $\varepsilon=64$, randomized-evaluation compact BFV scoring, and active-secure 10-out-of-$K$ key transfer. After unit-resolution refinement at each 99\% boundary, the smallest evaluated budgets are $K=292$ on SciDocs (0.18533 versus 0.18702), $K=104$ on Touch\'e (0.24753 versus 0.24954), and $K=1956$ on NQ-1M (0.62768 versus 0.63385). Each value uses the measured 32-thread full-index Hamming scan and the matched candidate-bound cryptographic path.

\paragraph{P$^2$RAG~\cite{ming2026p2rag}.} We use the authors' official implementation\footnote{https://github.com/myl7/p2rag}, compile the authors' retrieval kernel and expose only $N$, $d$, and the bisection depth as runtime parameters. We set $k'=16$ so its revealed set contains the requested top-10, measure five online compute runs, and combine their median with the exact user--server and inter-server byte formulas from the paper's Table~2. Its kernel uses all 192 hardware threads, compared with 16 threads for our BFV scorer, and the trusted-dealer preprocessing remains offline as specified by P$^2$RAG; these two factors favor its online result.

\paragraph{RemoteRAG~\cite{chang2025remoterag}.} We reproduce the paper's spherical-cap shortlist geometry and Paillier encrypted cosine with a 1024-bit modulus, \texttt{gmpy2} modular exponentiation, and process-parallel query encryption and inner products. At $\varepsilon=15360$ and $k=10$, the mean shortlist sizes are 627.3 on SciDocs, 1,581.4 on Touch\'e, and 2,178 on the five measured NQ-1M queries; the first two corpora retain 100\% of the float top-10 in the shortlist. We report the median of three long-lived-key queries on SciDocs and Touch\'e and five on NQ-1M.

\paragraph{PANTHER~\cite{li2025panther}.} We reuse the authors' artifact\footnote{https://github.com/AntCPLab/OpenPanther}. We compile the authors' random-client/random-server secure path, and quantize the common E5 vectors with a corpus-calibrated 9-bit affine map. SciDocs uses 8,074 bounded clusters, a 1,561-item stash, a maximum cluster size of 10, and 1,280 probes. Touch\'e uses four bounded-cluster levels with 28,198, 10,280, 3,205, and 1,016 clusters, a 14,432-item stash, a maximum cluster size of 20, and 512/256/128/64 probes. These settings obtain 99.22\% and 99.18\% float top-10 agreement, respectively. NQ-1M uses 49,720 and 63,928 bounded clusters, a 37,125-item stash, a maximum cluster size of 20, and 792/396 probes, attaining 99.0\% float top-10 agreement over 100 queries. The 768-dimensional, 20-point cluster layout expands its 113,648 PIR records to 15,480 32-bit elements each; the implementation maps 1,188 probes to 1,782 cuckoo bins and generates their SEAL replies in parallel while retaining the encoded database. This answer-stage memory peak exhausts the 256-GB host after database construction and the distance, argmin, and garbled-circuit stages complete, so we report OOM rather than an extrapolated latency. The completed SciDocs and Touch\'e values are medians of three secure runs with measured traffic.

\end{document}